\documentclass{llncs}


\usepackage{frameit,amsmath,color,latexsym,graphics,wrapfig,amssymb,flushend,balance}

\usepackage{times,epsfig,amsthm}
\setcounter{tocdepth}{3}

\usepackage{graphicx}
\usepackage[title]{appendix}
\usepackage{ifthen}
\usepackage{times,epsfig,amsmath,clrscode3e}
\usepackage{frameit,color,latexsym,graphics,wrapfig,textcomp,dsfont,centernot,enumitem,balance,multirow}
\usepackage[utf8]{inputenc}

\usepackage{hyperref}
\usepackage{cleveref}
\crefname{section}{§}{§§}
\Crefname{section}{§}{§§}
\usepackage{flushend}
\usepackage{url}
\usepackage {tikz}
\usetikzlibrary{arrows,automata}
\usetikzlibrary{positioning}
\usepackage{pifont}
\usepackage{epstopdf}
\usepackage{xcolor,colortbl}
\definecolor{Gray}{gray}{0.85}
\newcommand{\Conv}{\mathop{\scalebox{1.5}{\raisebox{-0.2ex}{$\ast$}}}}
\usepackage[linesnumbered,ruled,vlined]{algorithm2e}

\newcommand{\setvars}[1]{\ensuremath{\bar{#1}}}




















\newcommand{\savespace}{\vspace{-2mm}}
\newcommand{\saveone}{\vspace{-1mm}}













\newcommand{\RA}{\ensuremath{\mathcal{R}}}


\newcommand{\inv}{\myit{invert}}

\newcommand{\error}{\ensuremath{\varepsilon}}

\mathchardef\mhyphen="2D




\newcommand{\sepnodeF}[3]{\ensuremath{{#1}{\pto}#2(#3)}}

\newcommand{\sepnodeW}[4]{\ensuremath{\code{ST}({#1}{.}#2{,}#3{,}#4)}}
\newcommand{\sepnodeR}[4]{\ensuremath{\code{LD}({#1}{.}#2{,}#3{,}#4)}}
\newcommand{\sepnodeD}[2]{\ensuremath{\code{DEL}({#1}{,}#2)}}
\newcommand{\sepnode}[3]{\ensuremath{#1{\mapsto}#2(#3)}}
\newcommand{\seppredF}[2]{\ensuremath{#1(#2)}}
\newcommand{\seppred}[2]{\ensuremath{#1(#2)}}











\newcommand{\base}{\ensuremath{{B}}}

\newcommand{\report}[1]{ }

\newcommand{\acm}[1]{ }

\newcommand{\hide}[1]{}
\newcommand{\hideie}[1]{}
\newcommand{\der}{\ensuremath{\texttt{-\!>}}}
\newcommand{\nil}{{\code{null}}}
\newcommand{\res}{\btt{res}}
\newcommand{\emp}{\btt{emp}}

\newcommand{\pure}{\ensuremath{\pi}}

\newcommand{\heap}{\ensuremath{\kappa}}

\newcommand{\constr}{\ensuremath{\Phi}}

\newcommand{\myit}[1]{\textit{#1}}

\def\sep{\code{*}}

\def\primeV{\myit{fresh}}

\newcommand{\rulen}[1]{\ensuremath{{\bf \scriptstyle [\underline{#1}]}}}

\newcommand{\dangl}[1]{\ensuremath{#1{\not\hookrightarrow}\_}}


\def\FV{\myit{FV}}

\def\D{\Delta}



\def\new{\btt{new}}

\def\bool{\code{bool}}
\def\int{\code{int}}



\def\true{\code{true}\,}
\def\false{\code{false}\,}
\def\a{a}


\newcommand{\myif}[3]{\code{if}~#1~\code{then}~#2~\code{else}~#3}

\newcommand{\code}[1]{{\small {\ensuremath{\tt #1}}}}
\newcommand{\sm}[1]{{\small \mbox{$#1$}}}
\newcommand{\btt}[1]{{\ensuremath{\tt #1}}}

\newcommand{\locnay}[1]{}
\newcommand{\wnnay}[1]{}
\newcommand{\scnay}[1]{}
\newcommand{\nodo}[1]{}

\newcommand{\defpred}{\btt{pred}}

\numberwithin{thm}{section}

\newtheorem{defn}{Definition}












\def\nochange{\myit{initV}}

\def\sep{\ensuremath{*}}
\newcommand{\pto}{{\scriptsize\ensuremath{\mapsto}}}

\newcommand{\hlr}[3]{\ensuremath{\rulen{#1}\frac{\begin{array}{c}
#2\end{array}}{#3}}}
\renewcommand{\hlr}[3]{\ensuremath{\frac{\begin{array}{c}
\rulen{#1}\\
#2\end{array}}{#3}}}
\newcommand{\hlrside}[4]{\ensuremath{\rulen{#1}\frac{\begin{array}{c}
#2\end{array}}{#3}{#4}}}
\renewcommand{\hlrside}[4]{\ensuremath{\frac{\begin{array}{c}
\rulen{#1}\\#2\end{array}}{#3}{#4}}}
\newcommand{\hlrnone}[3]{\ensuremath{\frac{\begin{array}{c}
#2\end{array}}{#3}}}

\newcommand{\htriple}[3]{\ensuremath{ \code{exec}(#1{,}#2) {\yields} #3}}





\def\Flds{\myit{Fields}}
\def\Store{\myit{Heaps}}

\def\Stack{\myit{Stacks}}
\def\Locations{\myit{Loc}}
\def\Val{\myit{Val}}
\def\Var{\myit{Var}}

\def\iffs{\small \btt{ iff~}}

\def\inv{\bigtriangledown}

\newcommand{\atom}{\alpha}

\newcommand{\yields}{\leadsto}

\newcommand{\anon}{\ensuremath{\_\,}}

\newcommand{\imply}{\ensuremath{\Rightarrow}}


\def\Dns{\myit{Node}}
\newcommand{\defsym}{\ensuremath{\overset{\text{\scriptsize{def}}}{=}}}

\newcommand{\sheaps}{\ensuremath{h}}
\newcommand{\sstack}{\ensuremath{s}}
\newcommand{\force}{\ensuremath{\models}}

\newcommand{\invo}{\ensuremath{\sc{ \overline{inv}}}}

\newcommand{\form}[1]{\ensuremath{#1}}

\newcommand{\horn}{\ensuremath{\sigma}}

\newcommand{\PName}{\mathcal{P}}
\newcommand{\ass}{\ensuremath{\imply}}

\newcommand{\basetheory}{\ensuremath{\mathcal{L}}}

\newcommand{\slap}{\ensuremath{\code{SLPA_{ind}+}}}

\newcommand{\toolname}{S2TD}
 \newcommand{\sat}{{\sc{\code{yes}}}}
 \newcommand{\unsat}{{\sc{\code{no}}}}

\newcommand{\algonamesl}{{\sc{\code{{S2SAT_{SL}}}}}}
\newcommand{\algonameesl}{{\sc{\code{{S2e_{SL}}}}}}

\newcommand{\lbsl}{\code{link\_back_{eSL}}}
\newcommand{\uosl}{\code{UA_{eSL}}}
\newcommand{\oosl}{\code{OA_{eSL}}}
\newcommand{\unfoldsl}{\code{unfold_{eSL}}}

\newcommand{\satb}[1]{\code{\bf sat}(\ensuremath{#1})}
\newcommand{\xpure}{{\bf {\scriptstyle eXPure}}}

\newcommand{\sstate}{\Gamma}

\newcommand{\trace}{\tau}

\def\structd{\ensuremath{struct~node}}

\newcommand{\utree}[1]{\ensuremath{{\mathcal T}_{#1}}}

\newcommand{\backfun}{\ensuremath{{\mathcal C}}}
\newcommand{\ctree}[3]{\ensuremath{{\mathcal C}(#1{\rightarrow}#2, #3)}}
\newcommand{\sub}{\ensuremath{\theta}}

\newcommand{\conferencepaper}{1} 
\newcommand{\rep}[1]{\ifthenelse{\conferencepaper = 0}{#1}{}}
\newcommand{\repconf}[2]{\ifthenelse{\conferencepaper = 0}{#1}{#2}}

\begin{document}


 \title{{\toolname}: a Separation Logic Verifier that Supports Reasoning of the Absence and Presence of Bugs}


 \author{Quang Loc Le$^1$ \and Jun Sun$^2$ \and Long H. Pham$^2$ \and Shengchao Qin$^3$}
 \institute{University College London, United Kingdom \and Singapore Management University, Singapore \and Huawei Hong Kong Research Center, Hong Kong, China}

%
%
%
%


\maketitle

\begin{abstract}
Heap-manipulating programs
are 
known to be challenging
to reason about.
We present a novel verifier for heap-manipulating programs called {\toolname}, which 
encodes  programs systematically in the form of Constrained Horn Clauses (CHC) 
using a novel
extension of separation logic (SL) with recursive predicates
and dangling predicates.
  {\toolname} actively explores cyclic proofs 
to address the path-explosion problem.
{\toolname} differentiates itself from existing CHC-based verifiers by focusing on
heap-manipulating programs and by employing cyclic proof to efficiently verify
or falsify them with counterexamples.
Compared with  existing SL-based verifiers,
{\toolname}
precisely specifies the heaps over
 de-allocated pointers
 to avoid false positives in reasoning about the presence of bugs.
{\toolname} has been evaluated using a comprehensive set of benchmark programs from the SV-COMP repository. The results show that
{\toolname} is more effective than
state-of-art program verifiers
and is
more efficient than most of them.

\hide{Symbolic execution is a well established program analysis method (e.g., for test input generation or program verification).
However, one of its
primary limitations is path-explosion that degrades the generation of high coverage inputs.
Although there have been several ideas proposed to attack this problem for non-heap domains,
it lacks a proposal for heap-manipulating programs.

In this work, we propose a novel approach to solving the path-explosion in symbolic execution of
heap-manipulating programs with cyclic proofs. Particularly,
cyclic proofs help to detect subsuming paths that can be guaranteed to not hit a bug.
We demonstrate this proposal in verifying heap-manipulating programs using Constrained Horn Clause
in which separation logic is used to capture heap abstraction.
We show the challenges and our solutions to overcome.
We have implemented our proposed verification system and applied it
to a set of heap-based benchmarks. The experimental results show that our proposal
is effective and efficient compared with
state-of-the-art verifiers.}


\hide{Constrained Horn Clause (CHC)
 has recently
emerged as a promising intermediate language for the verification of unbounded
programs.
 Its main benefit is the separation of concerns, i.e., the same
system can be extended to support different programming
language syntaxes, operational semantics or verification algorithms. 
Unfortunately,
existing CHC-based systems are designed for non-heap logic and provide little support for dynamically linked data structures, which are essential for many real-world programs.}
\hide{a separation logic based CHC system for verifying or falsifying
safety properties of
heap-manipulating programs.
Given a sequential C program annotated with assertions (and without any invariant annotations), our system automatically
generates conditions for the verification of memory safety
(i.e., no null dereference or use-after-free or double-free) 
and correctness of the assertions.
These conditions are systematically
encoded in the proposed CHCs 
which are then discharged by a novel decision procedure for both
error detection and verification (with cyclic proofs).}

\end{abstract}

\keywords{Separation Logic, Constrained Horn Clauses, Program Verification}

\section{Introduction}
Heap-manipulating programs are often building-blocks of real-world applications.
Furthermore,
heap-related bugs are the sources of many security vulnerabilities.
The correctness of heap-manipulating
 programs is thus of great importance and yet they are notoriously challenging to verify 
 \cite{Cadar:CACM:2013}. 
Existing verification techniques  focus only on either numerical programs
or  safety verification
of heap-manipulating programs.
Moreover, existing techniques reason about the presence of memory bugs
either by  supporting only (in)equality constraints for pointers  \cite{Sen:FSE:05}
or by modeling pointers through the theory of arrays \cite{Cadar:OSDI:08,Godefroid:2012}.
None of them could  model
heap-manipulation precisely so as to
effectively and efficiently reason about the presence of heap-related bugs.
To bridge this gap,  we present {\toolname}, a novel verifier for heap-manipulating programs. {\toolname} differentiates itself from existing program verifiers not only by introducing a novel memory model for pointers focusing on
 heap-manipulating programs but also by adopting a novel approach for verification and falsification of heap-manipulating programs.

Firstly, {\toolname} is based on the promising framework of Constrained Horn Clause (CHC). 
Over the years, there has been an increasing interest in logic-based systems using CHC~\cite{Hoder:CAV:2011,Nikolaj:SMT:2012,DBLP:conf/birthday/BjornerGMR15,export:180055,Grebenshchikov:PLDI:2012,Gurfinkel:CAV:2015},
that provide a genuine compound of program verification techniques.
Given a program,
a logic-based system partitions the
verification process into two phases: (1) generating verification
 conditions in the form of CHC from the program and (2) checking the properties (e.g., reachability) of the CHC using
 decision procedures.
As a result, logic-based systems can take advantage of rapidly developing satisfiability checking engines,
 especially SMT solvers \cite{Barrett:CAV:2011,TACAS08:Moura} as the decision procedure.


However, existing logic-based systems provide little support for reasoning about heap-manipulating programs.
Fundamentally, this is because they rely on SMT solvers which are primarily based on first-order logic and have not catered to the needs of verifying heap-manipulating programs.
Another particular challenge to 
them
is that the encoding of the manipulations over the heaps
must be {\em flow-sensitive}.
For example,
 the CHC encoded for the two-command consequence \code{x{=}malloc(...); i{=}x{\der}val} must preserve the
 order of these two memory actions: first allocating a heap and then reading its content.
Otherwise, a false alarm might be raised. 
To tackle these challenges, {\toolname} equips CHC with a novel
extension of separation logic with recursive predicates
and dangling predicates to support precise reasoning about heap-manipulation.

\hide{Nowadays,
Abstract Interpretation (AI) \cite{Cousot:POPL:1977},
 Model Checking (MC) \cite{Clarke:1981:DSS} and Symbolic Execution (SE) \cite{King:1976:SEP}
are the most predominant automated techniques for software verification.
Each technique has its own advantages. For example, AI is efficient for undecidable properties;
MC is dedicated for complete domains; and SE is a balanced framework for
the purposes of concolic testing and deductive verification.
It is thus tempting to build a verification system which combines these techniques.
This task is challenging 
 since each technique
 relies on its own intermediate representations
and interpretation of program semantics.
An alternative option would be to run a series of tools independently.
 However, this limits the information that can be exchanged
among techniques, which could be used to enhance the capability of the individual techniques.}

Secondly, {\toolname} novelly employs cyclic proof techniques to discharge CHC. One way to verify a heap-manipulating program
is to verify the program paths one-by-one, which is also known as symbolic execution~\cite{King:1976:SEP}.
\hide{Symbolic execution  \cite{King:1976:SEP} is a well-established method in automated software testing.
A symbolic engine executes a program with symbolic values (rather than concrete ones)
and constructs symbolic constraints, called path conditions,
which include the conditions at the branch commands (i.e., \code{if}, \code{while})
along the path executed.
To achieve the high coverage, the engine generate new paths from the existing ones.
A new path is generated by negating a condition of an existing path.
The input for this new path is generated typically using a constraint solver.
A path is feasible if its path condition is satisfiable.
Otherwise, it is infeasible.
A program is buggy if
there is a feasible path that reaches a program buggy point.}
Such an approach is ineffective due to the so-called path-explosion problem.
Several methods have been proposed to attack this problem, e.g., using function summaries \cite{Godefroid:PLDI:2007}, combining static and dynamic analysis~\cite{Boonstoppel:TACAS:2008,Godefroid:POPL:2010}, and using interpolation~\cite{Jaffar:FSE:2013}.
These methods are however not designed with heap-manipulation in mind.
S2TD tackles this problem with cyclic proofs. 
%
%
The idea is to actively explore subsumption relationships between path conditions of different program paths in order to prune
them.
While such circular reasoning has been applied
in safety/termination verification \cite{Brotherston:POPL:08,Brotherston:CADE:17,Brotherston:CPP:17},
{\em this work 
applies cyclic satisfiability proofs into symbolic execution to reason about the presence (and absence) of bugs.}

\hide{First,
assume that the program is annotated with certain bug condition of the form
``\code{if (\form{C}) ~then~\code{ERROR()}}'', where only when
the condition \code{C} evaluated to \form{\true}
along a path, the path is buggy. Secondly, assume that at a program location point \form{l_0},
there are three branches as follows:
where the paths along location points \form{l_1} and \form{l_2} have been symbolically executed
and the remaining dotted branches have not been executed yet.
We further assume that the path along \form{l_1} is annotated with the bug condition \form{C}
and its path condition \form{\delta_1} is unsatisfiable;
 the path along \form{l_2} is not annotated with any bug condition
and its path condition is \form{\delta_2}.
Obviously, these two paths are bug-free.
Then a path is bug-free as well if its path condition \form{\delta_i} is subsumed by either
\form{\delta_1} or \form{\delta_2}. Hence, that path could be pruned.
Furthermore, if (i) all the remaining paths starts at \form{l_0} (i.e., the command at \form{l_o} is either a loop or a call of a recursive function),
(ii) we can represent all paths at \form{l_0} in the form of a constraint formula \form{\D_0}
as well as all the remaining paths in the form of another constraint formula \form{\D_i},
and (iii) \form{\D_i} is subsumed by \form{\D_0},
then we can soundly prune all the remaining paths as well.
In the following, we show how to use Constrained Horn Clause (CHC)
to capture the remaining paths in symbolic execution.}

\hide{There has been considerable progress on automated verification of heap-manipulating programs over the past decades.
Broadly speaking, existing approaches can be classified into two categories.
The first category of verification tools, based on
under-approximation, aims for bug identification, i.e., they search for a witness which shows that a program is erroneous.
Example approaches in this category include  program testing~\cite{Herbert:SEFM:2015},  symbolic execution~\cite{Braione:FSE:2015}
and  bounded model checking~\cite{Wei:VMCAI:2014,Chalupa:TACAS:2016}.
%
%
The second category of verification tools, based on
over-approximation, aims for program verification, i.e., they search for a proof
of unsatisfiability to show that a program is error-free.
Techniques used in these tools include data flow analysis~\cite{Sui:ISSTA:2012}, abstract interpretation~\cite{CAV08:Yang,Dudka:CAV:2011,Bouajjani:PLDI:2011}, and deductive verification~\cite{Cristiano:POPL:09,Chin:SCP:12,Loc:CAV:2014}.
Recently, there have been some efforts to support both program verification and bug identification for heap-manipulating programs.
For examples, the work on spatial interpolant~\cite{Aws:ESOP:2015}, which assumes that a shape
abstraction is given,  constructs an over-approximation and an under-approximation of the abstraction at the same time.
In addition, there are techniques based on hardwired abstraction predicates which work only for specific data structures (e.g., list~\cite{Itzhaky:CAV:2014,Aleksandr:CAV:2015,Ruzica:TACAS:2014,Navarro:APLAS:2013}).
These techniques are not designed for arbitrary data structures.}

{\toolname} proceeds in two phases. In the first phase, it
precisely encodes a program into an intermediate representation. 
The intermediate language is an extension of the existing CHC to support
separation logic~\cite{Ishtiaq:POPL01,Reynolds:LICS02}.
We further extend the standard separation logic with additional
assertions (e.g., dangling predicates).
Without the dangling predicates,
 a reasoning system (e.g., \cite{Berdine:APLAS05,jacm.Calcagno11,Chin:SCP:12,Piskac:CAV:2013,Muller:CAV:2016,Long:ATVA:2019,Long:FM:2019})
 is unable to represent
the de-allocated heaps precisely. Such a system typically
over-approximates the semantics of the heaps and
produces false positives in general.  {\em To the best of our knowledge, {\toolname} is the first 
reasoning system
supporting such extended separation logic.}

In the second phase, the CHC generated in the first phase is solved using
a decision procedure, called  {\algonameesl}.
{\algonameesl} is an instantiation of the general cyclic proof framework presented 
in \cite{Loc:CAV:2016} for
the proposed separation logic.
 {\algonameesl} can be regarded as a symbolic execution engine for our CHC-based system. It executes the encoded program symbolically and constructs an execution tree. It simultaneously maintains 
both under- and over-approximation of the CHC and actively explores cyclic proof as described above.
It is sound for both verification and falsification. A cyclic proof is constructed in the former case and a counterexample 
is generated in the latter case.

\paragraph*{Contribution} We make the following contributions: (i)
We propose a novel verification system to verify or falsify heap-manipulating programs based on CHC. 
(ii) We present an effective
 solver, an instantiation of
the general framework in \cite{Loc:CAV:2016},  
for the CHC based on cyclic proofs.  
(iii)  We have implemented {\toolname} 
 and have applied it to a comprehensive set of heap-manipulating programs.
The experimental results show that {\toolname} is effective and efficient compared with
state-of-the-art verifiers.



 \section{Intermediate Language} \label{sec:assert}
We first introduce the intermediate language 
we use to  encode programs as CHC. As shown in Fig. \ref{prm.spec.fig}, 
it is based on
separation logic 
with
{\em inductive} predicates and arithmetic.

The intermediate language  formulas and definitions of inductive predicates. A formula \form{\constr} is a disjunction of symbolic heap \form{\D} where each disjunct models one program 

\hide{
given that there is
    an infinite collection of variables $\Var$.  \form{\bar{x}} denotes a sequence of variables
and \form{{x_i}} its \form{i^{th}} element,
 a finite collection of data structures {\Dns},
 a finite collection of fields {\Flds},
 a disjoint set {\Locations} of locations (i.e., heap addresses),
 a set of non-address values {\Val} such that \form{\mathbb{Z} {\subset} \Val}, \form{\nil{\in} {\Val}}
and {\Val} \form{\cap} {\Locations} {=}\form{\emptyset},
 and a finite set of inductive symbols \code{\PName}.
Lastly, $\bot \in \Val$ is a preserved value which
 denotes the content of a heap after it has been
de-allocated.
}

 \begin{wrapfigure}{l}{0.55\textwidth}
 \begin{small}
 \begin{center}
 \savespace 
 \savespace
 \saveone\[\savespace
\begin{array}{rcl}
\constr & {~::=} & \true  ~|~ \false  \mid \D \mid \constr {\vee} \constr 
\\
\D & {~::=}&  \exists \bar{v}{.}~(\heap{\wedge}\pure) 
 \\
\heap & {~::=} & \emp ~|~ 
\sepnodeF{x}{c}{\overline{f{:}v}} ~|~ \seppredF{\code{P}}{\setvars{v}} 
~|~\heap {\sep}\heap \\
\pure & {~::=} &  \true |~ v |~
\pure {\wedge} \pure |~ \pure {\vee} \pure |~ \neg\pure |~
 \exists v{.}~ \pure \mid  
 \atom \mid  \form{\phi}     \\ 
\atom  & {~::=} &  
v{=}\nil ~|~  \dangl{v} \mid
 \form{\sepnodeR{v}{f}{x}{k}} 
  \\
 & &
  \mid \form{\sepnodeW{v}{f}{x}{k}} 
\mid \form{\sepnodeD{v}{k}} \\
 \form{\phi} & {~::=} & \a {=} \a \mid \a {\leq} \a \quad
a  {~::=}
          ~k \mid v \mid 
          \a {+} \a \mid - \a \\
\textit{CHC} & {~::=} & \D{:}l \imply  \seppredF{\code{P}}{\setvars{v}} \mid \seppredF{\code{P}}{\setvars{v}} \equiv \bigvee^n_{i=1} (\D_i{:}l_i) \mid \D \\
\end{array}
\]
 \saveone
\caption{Syntax of intermediate language (\form{k \in \mathbb{Z}})}\label{prm.spec.fig}
\end{center} 
\savespace
\end{small}
\end{wrapfigure}

path.
\form{\D} is an existentially quantified formula
consisting of a spatial constraint \form {\heap}
 and a pure (non-heap) constraint \form{\pure}.
\form{\FV({\D})} denotes all free variables in the formula \form{\D}.
We assume an infinite collection of variables $\Var$,  a finite collection of data structures {\Dns},
 a finite collection of fields {\Flds},  a  set of heap addresses {\Locations}, 
  a set of non-address values {\Val} i.e. \form{\mathbb{Z} {\subset} \Val}, \form{\nil{\in} {\Val}}
and {\Val} \form{\cap} {\Locations} {=}\form{\emptyset},
 and a finite set of inductive symbols \code{\PName}.
 \form{\bar{x}} denotes a sequence of variables
and $\bot {\in} \Val$ is a preserved value denoting the content of a heap cell following
de-allocation.


A spatial formula {\heap} is either the predicate \form{\emp}
(asserts an empty heap), a points-to predicate \form{\sepnodeF{x}{c}{\overline{f{:}v}}}
 where $c {\in} {\Dns}$ and  $f {\in} {\Flds}$
(asserts that the pointer \form{x}
 points to singly allocated heap typed \form{c} with content \form{\overline{f{:}v}}), an inductive predicate instance \form{\seppredF{\code{P}}{\setvars{v}}} (represents an infinite set of allocated
objects which are defined by predicate \code{P}),
or their spatial conjunction.
When there is no ambiguity, we discard \form{f} and simply write the short form \form{\sepnodeF{x}{c}{\bar{v}}}.
A formula is said to be a {\em base formula}, denoted by \form{\base}, if it does not contain any inductive predicates.
A pure formula {\pure} can be 
a formula in Presburger arithmetic \form{\phi}, a pointer constraint  \form{\atom},  or a boolean combination of them.
The pointer constraint may include (in)equalities, dangling predicate \form{\dangl{x}}
(i.e., \form{x} is not yet allocated or is already de-allocated), or the following three new predicates to simulate
 memory accesses.
\form{\sepnodeR{v}{f}{x}{k}} simulates \form{k^{th}}
memory access for
 loading the
value at the memory of field \form{f}
pointed to by \form{v} into variable \form{x}.
\form{\sepnodeW{v}{f}{x}{k}}  simulates \form{k^{th}}
memory access for writing the value of variable
\form{x} into the memory of field \form{f}
pointed to by \form{v}.
And \form{\sepnodeD{v}{k}} simulates \form{k^{th}}
memory access for de-allocating memory
pointed to by variable \form{v}.
We use \form{\a_1 {\neq} \a_2}
as short forms for \form{\neg (\a_1{=}\a_2)}.
$\res$ is a reserved variable to denote
the returned value of each  procedure.
Note that in our definition, a pure formula 
can be negated. However, our encoding
only generates positive form of the three simulation predicates.




An inductive predicate 
is defined as
$\begin{small}\;\form{
\defpred~ \seppred{\code{P}}{\setvars{t}}~{\equiv}
   \bigvee^n_{i=1} (
   \D_i{:}l_i), \;
}\end{small}$
where
 \code{P} is the predicate name,
\form{\setvars{t}} is a sequence of parameters and
 \form{
 \D_i} (\form{i \in \{1...n\}})
 are
 definition rules (branches). 
%
Each inductive predicate is associated with an  invariant \form{\overline{inv}} representing a superset of all possible models of the predicate.
This invariant is generated automatically
and used to efficiently prune
infeasible executions.
Each branch corresponds to a path in the program captured
by path label \form{l_i},
 a sequence of controls at the {\em branching} statements. 
This label \form{l_i} is used to generate the witness to
 program errors.
 We use 1 to label the branch which satisfies
 conditions of branching statements (e.g., \code{then} branch)
 and use 2, otherwise
(e.g., \code{else} branch).
In each definition rule, variables not in \form{\setvars{t}} are always existentially quantified.
Lastly, a CHC \form{\horn} is defined as:
 \form{\D{:}l {\ass}  \seppredF{\code{P}}{\setvars{v}}} where
\form{\setvars{v}{\subseteq} \FV(\D)}, \form{\D} is its body
 (with path label \form{l}) and
\form{\seppredF{\code{P}}{\setvars{v}}} is its head.
A clause without a head is called a query.
We note that CHCs with the same head e.g., VC generated for different paths of a procedure \code{P} (or a loop)
\form{(\D_{1}{:}l_1 {\ass} P(\setvars{v})) ~{\vee}~
 (\D_{2}{:}l_2 {\ass} P(\setvars{v}))},
is written in the equivalent form of predicate definition
as:
 \form{ P(\setvars{v}) ~{\equiv}~ \D_{1}{:}l_1 ~{\vee}~ \D_{2}{:}l_2}.

\hide{Our proposal relies on the following definitions.
\form{\seppredF{\code{P}}{\setvars{v}}} is called (heap) observable if
there is
at least one free {\em pointer-typed} variable in \setvars{v}.
Otherwise, it is called {\em unobservable}.
\sepnode{v}{c}{\setvars{t}} is called (heap)
 observable if \form{v} is free.
Otherwise, it is {\em unobservable}.}


\paragraph*{Semantics} Formulas of our separation logic fragment are interpreted over pairs
(\form{\sstack},\form{\sheaps}) where \form{\sstack}
models the program stack and \form{\sheaps} models the program heap.
Formally,  we define:
\savespace\[\saveone
\begin{small}
\begin{array}{lcllcllcllcl}
{\Store}  & {\defsym} &  {\Locations} {\rightharpoonup_{fin}} ({\Dns} ~{\rightarrow}~ \Flds ~{\rightarrow}~ \Val \cup \Locations)  &\qquad
{\Stack} & {\defsym} &  {\Var} ~{\rightarrow}~ \Val \cup \Locations
\end{array}
\end{small}\]
The semantics of a formula \form{\constr} is defined by a 
 relation:
\form{\sstack,\sheaps \force \constr}
that forces the stack \form{\sstack \in {\Stack}} and heap \form{\sheaps \in {\Store}} to satisfy the constraint
 \form{\constr}.
The semantics of all predicates except the dangling predicate
and new simulation predicates
are standard (e.g.,~\cite{Loc:CAV:2016} for a reference).
The semantics of the new predicates is as follows.
\[\begin{small}
\begin{array}{lcl}
 {\sstack} {\force} \dangl{v} &
{\iffs}& \forall \sheaps,\code{c}, \code{f} ~s.t. \text{dom}(\sheaps) {=}\{\sstack(v)\},
\sheaps(\sstack(v))(c{,}f) \text{ is defined }, 
 \sheaps(\sstack(v))(c{,}f){=} \bot \\
\form{{\sstack} {\force} \sepnodeR{v}{f}{x}{k}} &
{\iffs}& \form{k \in \mathbb{Z}, k{\geq}0} \text{ and } \form{ \exists \sheaps, \code{c}. ~\form{\sstack},\form{\sheaps} {\force} \sepnodeF{v}{c}{\anon}} \\
\form{{\sstack} {\force} \sepnodeW{v}{f}{x}{k}} &
{\iffs}& \form{k \in \mathbb{Z}, k{\geq}0} \text{ and } \form{ \exists \sheaps, \code{c}. ~\form{\sstack},\form{\sheaps} {\force} \sepnodeF{v}{c}{\anon}} \\
\form{{\sstack} {\force} \sepnodeD{v}{k}} &
{\iffs}& \form{k \in \mathbb{Z}, k{\geq}0} \text{ and } \form{ \exists \sheaps, \code{c}. ~\form{\sstack},\form{\sheaps} {\force} \sepnodeF{v}{c}{\anon}} \\
\end{array}
\end{small} \]

\hide{
%
In our system, pure domains include integer domain (\code{Ints}), 
 and boolean.
The evaluation for pure expressions are determined by valuations as follows:
\saveone\[\saveone
\form{{\sstack}(\a) ~{\in}~ \code{Ints} 
 \qquad {\sstack}(\myit{b}) ~{\in}~ \{\true,\false\}}
\]

The semantics is presented in Fig. \ref{fig.sl.sem} where the semantics for pure formulas
is omitted. 
%
\begin{figure}[t]
\begin{center}
\hide{\[
\begin{array}{rl}
 \form{\sstack} \force \pure_1 {\wedge} \pure_2 ~\iffs &
 \form{\sstack} \force \pure_1 \text{ and }
 \form{\sstack}\force \pure_2 \\
\form{\sstack} \force \pure_1 {\vee} \pure_2 ~ \iffs &
 \form{\sstack} \force \pure_1 \text{ or }
 \form{\sstack}\force \pure_2 \\
 \form{\sstack} \force v_1{\oslash}v_2 ~ \iffs & \force {\sstack}(v_1)~{\oslash}~ {\sstack}(v_2) \text{, where }
 \oslash \in \{=,\neq\}
  \\
 \form{\sstack} \force {\a_1}{\oslash}{\a_2} ~ \iffs & \force {\sstack}(\a_1)~{\oslash}~ {\sstack}(\a_2) \text{, where }
 \oslash \in \{=,\leq\}
  \\
 \end{array}
 \]}
 \[
 \begin{array}{rcl}
\form{\sstack},\form{\sheaps} {\force} \emp 
&{\iffs} & \sheaps {=} \emptyset\\
\!\!\form{\sstack},\form{\sheaps} {\force} \sepnodeF{v}{c}{{f_1{:}v_1..f_n{:}v_n}}
\!\!\!\!&{\iffs} & l {=}\sstack(v), \text{dom}(\sheaps) {=}\{l\} {\wedge}\sheaps(l){=}r \\
&  & \mbox{ and } r(c{,}f_i){=} \sstack({v_i})  \mbox{ for } 1{\le}i{\le}n\\
\form{\sstack},\form{\sheaps} {\force} \dangl{x}
&{\iffs} & \exists \form{\sheaps'}.~  \sheaps' {\#} \sheaps \text{ and }
\form{\sstack},\form{\sheaps'} {\force} \dangl{x} \\
\form{\sstack},\form{\sheaps} {\force}  \heap_1 \sep \heap_2 ~
&{\iffs} & \exists \sheaps_1,\sheaps_2 {.}~ \sheaps_1 {\#} \sheaps_2\mbox{ and }  \sheaps{=}\sheaps_1 {\cdot} \sheaps_2  \\
&&
 \mbox{ and } \sstack,\sheaps_1 \force \heap_1 \mbox{ and } \sstack{,}\sheaps_2 \force \heap_2\\
\form{\sstack},\form{\sheaps} \force \true 
&{\iffs} & \mbox{always}\\
\form{\sstack},\form{\sheaps} \force \false
&{\iffs} & \mbox{never}\\
\form{\sstack},\form{\sheaps} \force \heap{\wedge}\pure
&{\iffs} &  {\sstack} {,}\form{\sheaps} \force \heap
\text{ and } {\sstack} \force {\pure} \\
\form{\sstack},\form{\sheaps} \force \exists v {.}\D
&{\iffs} & \exists {\nu} {.}~ {\sstack}[{v} {\pto} {\nu}] {,}\form{\sheaps} \force \D\\
\form{\sstack},\form{\sheaps} \force  \constr_1 \vee \constr_2
&{\iffs} & \sstack,\sheaps \force \constr_1 \mbox{ or } \sstack,\sheaps \force \constr_2\\
\end{array}
\]
\vspace*{-2mm}
\caption{Semantics. }
\label{fig.sl.sem}
\vspace*{-4mm}
\end{center}
\end{figure}
\form{dom(f)} is the domain of the function  \form{f}; 
 \form{\sheaps_1 {\#} \sheaps_2} denotes that
 heaps $h_1$ and $h_2$ are (domain) disjoint, i.e.,
\form{\text{dom}(\sheaps_1) {\cap} \text{dom}(\sheaps_2) {=} \emptyset}; and
  \form{\sheaps_1 {\cdot} \sheaps_2} denotes the
  union of two disjoint heaps. \sm{{\sstack}[{v_1} {\pto} {\nu_1},..,{v_n} {\pto} {\nu_n}] (v) = {\sstack}(v)} for \sm{v{\notin}\{v_1,..,v_n\}}, and \sm{{\sstack}[{v_1} {\pto} {\nu_1},..,{v_n} {\pto} {\nu_n}] (v_i)=\nu_i} for \sm{1{\le}i{\le}n}.
%
}

\hide{\begin{verbatim}

\eval

\inv^0(lsegn) = \emptyset
\inv^1(lsegn) = \inv^0(lsegn) \vee \eval[(\constr)](\inv^0(lsegn))
               = \emptyset \vee ({(p,p,0,{\emptyheap}) | p \in \val} \vee \emptyset)
                      = {(p,p,0,{\emptyheap}) | p \in \val}
\inv^2(lsegn) = \inv^1(lsegn) \vee \eval[(\constr)]_{\inv^1(lsegn)} = \inv^1 \vee ()
              = {(p,p,0,{\emptyheap}) | p \in \val} \vee
                {(p,p,0,{\emptyheap}) | p \in \val}
                 \vee {(x,p,n,[x->(v,q)]) | x in val, p \in val,
                  n\in val, v,q\in val and x\neq p and n=1 }
\end{verbatim}
}

\hide{
We use $[ \seppredF{\code{P}}{\setvars{v}} ]$ to denote
the set of all models satisfying predicate \form{\seppredF{\code{P}}{\setvars{v}}}
(i.e., \form{\sstack, \sheaps \force \seppredF{\code{P}}{\setvars{v}}} iff
$(\sstack, \sheaps) \in [ \seppredF{\code{P}}{\setvars{v}} ]$).
And \form{\Gamma} is a function to map every inductive predicate \form{\seppredF{\code{P}}{\setvars{v}}}
to $[ \seppredF{\code{P}}{\setvars{v}} ]$.
%
%
Lastly, a CHC \form{\horn} is defined as:
 \form{\D{:}l_i \ass  \seppredF{\code{P}}{\setvars{v}}} where
\form{\setvars{v}{\subseteq} \FV(\D)}, \form{\D} is its body
 (with path label \form{l_i}) and
\form{\seppredF{\code{P}}{\setvars{v}}} is its head.
A clause without a head is called a query.
Semantically, \form{\horn} is satisfiable if for any model \form{\sstack, \sheaps} and
\form{\sstack, \sheaps \force \D}, then \form{(\sstack, \sheaps) \in [\seppredF{\code{P}}{\setvars{v}} ]}. 
A set of CHCs \form{\RA} is satisfiable if there exists a function
\form{\Gamma} 
such that
every clause \form{\horn {\in} \RA} is
satisfiable under \form{\Gamma}. That means any \form{\sstack, \sheaps}, and
\form{{\exists} \setvars{w}.~\Conv^n_{i{=}1}
\sepnodeF{x_i}{c_i}{\setvars{d_i}} {\sep} \Conv^n_{i{=}l}
 \seppredF{\code{P}}{\setvars{v}_i} {\wedge} \pure \imply \seppredF{\code{P_0}}{\setvars{v}_0} {\in} \RA}, \form{\sstack, \sheaps \force {\exists} \setvars{w}.~\Conv^n_{i{=}1}
\sepnodeF{x_i}{c_i}{\setvars{d_i}} {\sep} \Conv^n_{i{=}l}
 \seppredF{\code{P}}{\setvars{v}_i} {\wedge} \pure } and
 \form{\sstack, \sheaps \in [\seppredF{\code{P_0}}{\setvars{v}_0}]} iff there exists
 \form{\sstack_i \subseteq \sstack, \sheaps_i \subseteq \sheaps} for all \form{i \in \{1..l\}},
 \form{\sheaps_i} is disjoint from \form{\sheaps_j} for all \form{i,j \in \{1..l\}}
 and \form{\sstack_i, \sheaps_i \in [\seppredF{\code{P_i}}{\setvars{v}_i}]}.
We note that CHCs with the same head e.g., VC generated for different paths of a procedure \code{P} (or a loop)
\form{(\D_{1}{:}l_i {\ass} P(\setvars{v})) ~{\vee}~
 (\D_{2}{:}l_j {\ass} P(\setvars{v}))},
is written in the equivalent form as:
 \form{ P(\setvars{v}) ~{\equiv}~ \D_{1}{:}l_i ~{\vee}~ \D_{2}{:}l_j}.
\hide{The above semantics indicates that
 a pointer $x$ always satisfies one of the following three predicates: \form{x{=}null}, \form{ \dangl{x}} (i.e., $x$ is dangling), and \form{\sepnodeF{x}{c}{f_i{:}v_i}} (i.e., it points to some heap object).
Based on that we define the negation for pointer constraints 
as follows.
\[
\begin{array}{rcl}
\neg (\sepnodeF{x}{c}{f_i{:}v_i}) & {\equiv} & x{=}\nil ~{\vee}~\dangl{x} \\
\neg (x{=}\nil) & {\equiv} & \sepnodeF{x}{c}{f_i{:}v_i} ~{\vee}~ \dangl{x} \\
\neg (\dangl{x}) & {\equiv} & x{=}\nil ~{\vee}~ \sepnodeF{x}{c}{f_i{:}v_i}
\end{array}
\]}
}

\section{Overall Approach} \label{sec:overall}
\label{sec.motivate}
\hide{\begin{figure*}[t]
 \begin{center}
  \includegraphics[width=0.78\textwidth]{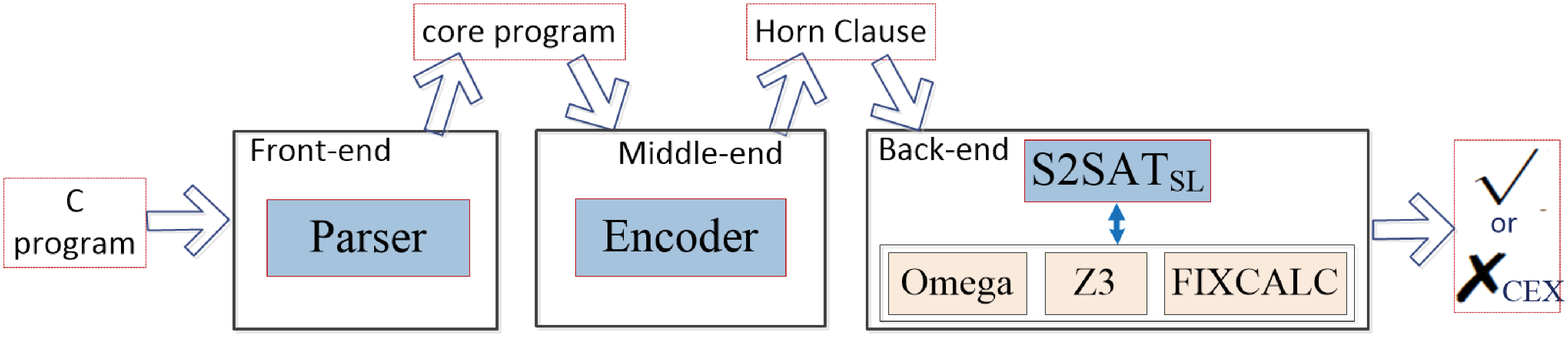}
\savespace  
\end{center}
   \caption{Main Components of {\toolname} Verification System}
   \label{s2sd}
\end{figure*}}

 We show how {\toolname} works using the illustrative C program shown in Fig.~\ref{fig.mov.slldll}\,a). The program allocates a null-terminated singly-linked list using a recursive procedure \code{sll} at line 10. 
The list contains a list of decreasing non-negative integer numbers. The loop from line 
11 to 15 traverses through the list to check if any number stored in the list is negative. For each node in the list, if its value is negative, an error
occurs (at line 13). For simplicity, we assume that user assertions are specified in the form of ``\code{if (\form{C}) ~then~\code{ERROR()}}''.
 Given the program, {\toolname} automatically verifies: (i) whether
method \code{ERROR()} 
at line 
13 is reachable; 
 and (ii) whether this program is memory safe, i.e., no error caused by
 de-reference pointers 
 (at line 5, 13 and 14)
 and free pointers 
 (at line 15). While
{\toolname} could return ``no'' for all these queries,
our experimental results show that the state-of-the-art verification systems (e.g., PredatorHP \cite{Predator:SV:16},
 SeaHorn \cite{Gurfinkel:CAV:2015}, Cascade \cite{Wei:VMCAI:2014},
ESBMC \cite{Cordeiro:2011:ICSE}, UAutomizer \cite{Heizmann:CAV:13},
CPAChecker \cite{Beyer:CAV:2007}, CBMC \cite{Clarke:TACAS:2004} and
Smack-Corral \cite{Haran:TACAS:2015}) could not handle this example.


\hide{
/ list(x)
node prev = x;
while (x != null) {
        x = x-> next;
        x -> prev = prev;
        prev = x;
}
}
\begin{figure*}[tb]
\begin{center} 
\begin{scriptsize}
 \savespace\[\savespace
\begin{array}{ll}
\begin{array}{c}
\begin{array}{ll}
1  & \code{\structd~\{int \, val;} \; \code{ \structd \,{\sep}\, next;\}} \\
2 & \code{\structd{\sep}~sll(int ~ i)\{} \\
3 & \code{~~ if(i{==}0)~ return~ \code{NULL};}  \\
4 & \code{ ~~ else~\{~\structd ~ {\sep}n{=}(\structd{\sep})} \\
  & \code{ \qquad \quad malloc(sizeof(\structd));} \\
5 & \code{ \quad n{\der}val{=}i; ~ n{\der}next{=}sll(i{-}1);} \\
6 & \code{ \quad return~n;~\}~\}} \\
7 & \code{int~main(int ~n)\{} \\
8 & \code{ ~~ \structd~{\sep}x;}  \\
9 & \code{~~ if(n{<}0) ~return~0;}  \\
10 & \code{ ~~ x{=}sll(n);}  \\
11 & \code{ ~~ while(x) \{}  \\
12 & \code{ \quad \structd~{\sep}tmp{=}x;}  \\
13 & \code{ \quad if(x{\der}val{<}0) ~ ERROR();}\\
& \qquad\qquad\qquad
\code{//~assert(x{\der}val{\geq}0);}  \\
14 & \code{ \quad x{=}x{\der}next;}  \\
15 & \code{ \quad free ~ tmp;~\}}  \\
16 & \code{ ~~ return ~1;~\}} \\
\end{array} \\ \\
\text{a) An example C program} \\
\end{array} &
\begin{array}{c}
\begin{array}{l}
\defpred~\form{\seppred{\textit{sll}}{i{,}\res{,}\error}} ~\equiv~
\form{\emp {\wedge}\res{=}\nil {\wedge} i{=}0 {\wedge}\error{=}0} {:}[1] \\ 
~~ {\vee} \form{~{\exists} ~i_1{,}r {.}~ \sepnode{\res}{\textit{node}}{i{,}r} {\sep}
\seppred{\textit{sll}}{i_1{,}r{,}\error}_{1}^0 {\wedge}i{\neq}0{\wedge} i_1{=}i{-}1}{:}[2] 
\\
\invo{:}~i{\geq}0 {\wedge} \error{=}0; \\
\text{} \\
\defpred~\form{\seppred{\textit{loop}_{11}}{x_0{,}x{,}\error}} ~\equiv~
\form{\emp {\wedge} x_0{=}\nil  {\wedge}x{=}x_0 {\wedge} \error{=}0}{:}[2] 
\\
~~ {\vee} ~ \emp{\wedge}{\dangl{x}{\wedge} x{\neq}\nil {\wedge}x{=}x_0 {\wedge} \error{=}1}{:}[1] \\ 
~~ {\vee}~ \exists v_1{.}
\form{\emp{\wedge}{\neg}\dangl{x}
 {\wedge}x{\neq}\nil {\wedge} \sepnodeR{x}{\textit{val}}{v_1}{0}} 
 \\
 \qquad
 \form{{\wedge}x{=}x_0 
 {\wedge}v_1{<}0{\wedge} \error{=}1}{:}[1;1] \\
~~ {\vee}~ \exists v_1, x_1{.}
\form{\seppred{\textit{loop}_{11}}{x_1{,}x{,}\error}_{1}^0{\wedge}{\neg}\dangl{x_0}
 {\wedge}x_0{\neq}\nil} 
 \\
 \qquad
 \form{{\wedge} \sepnodeR{x_0}{val}{v_1}{0} 
 {\wedge}v_1{\geq}0} {\wedge}
\form{ \sepnodeR{x_0}{\textit{next}}{x_1}{1} 
}{:}[1;2] \\
\invo{:}~0{\leq}\error{\leq}1;
\\
\text{} \\
\defpred~\form{\seppred{\textit{main}}{n{,}\res{,}\error}} ~\equiv~
\form{ \emp {\wedge}n{<}0 {\wedge} \res{=}0 {\wedge} \error{=}0}{:}[1] \\
~~ {\vee} ~\form{{\exists} {x_0}{,}x{,}{\error_1}{,}{\error_2} {.}\seppred{\textit{sll}}{n{,}x_0{,}\error_1}_{1}^0
{\sep} \seppred{\textit{loop}_{11}}{x_0{,}x{,}\error_2}_{1}^1 \\
\qquad{\wedge}n{\geq}0
{\wedge} 
}
\form{ \error{=}\error_2 }{:}[2] \\
\invo{:}~0{\leq}\res{\leq}1;\\
\text{ }
\\
\form{\seppred{\textit{main}}{n_0{,}\res_0{,}\error}^0_0 {\wedge}\error{=}1
}. \\
\end{array}\\
\text{ } \\
\text{b) Verification Condition} \\
\end{array}\\
 \end{array}
\]
\end{scriptsize}
\caption{An illustrative example}
\label{fig.mov.slldll}
\end{center}
\end{figure*}

{\toolname} has three main components: a front-end parser,
a middle-end encoder and the back-end decision procedure  {\algonameesl}.
The parser is based on CIL~\cite{Necula:CC:2002}. It takes the C program as input and transforms it
into a core program.
We assume that the transformation starts at a special entry method named \code{main}.
Specifically, 
the parser converts the program into a static single assignment (SSA) form, 
internalizes global variables and transforms loops into
{\em tail recursive} procedures. For example,
 the \code{while} loop from line 11 to 15 is transformed
into a recursive procedure \form{\textit{loop}_{11}}
 and the loop is replaced
by a call of procedure \form{\textit{loop}_{11}}. The arguments of \form{\textit{loop}_{11}} include
a set of input parameters, which are those variables occurring
in the \code{while} condition or the body of the loop,
and a set of output parameters, which are those variables that are
 modified within the loop body (instead of using pass-by-reference variables). 
Each call of \code{ERROR()} is replaced
by: \code{assume (\error{=}1)}
where
\form{\error} is a reserved variable which encodes
  exit conditions.
 \form{\error} has two possible values: \textcolor{red}{\form{\error{=}1}} arising from either \code{ERROR()}
or a heap-manipulation violation
and  \textcolor{blue}{\form{\error{=}0}} for normal termination.
Here,
 we represent the single kind of error.
For a better error explanation, we could consider different kinds of errors using a lattice like in \cite{Loc:NFM:2013}.


\subsection{Flow-Sensitive Encoding}\label{mov.encoder}
The encoder takes a core program as input
and produces a CHC system.
Intuitively, it transforms each procedure into
 an inductive predicate definition.
 It is a forward symbolic executor
of the form \htriple{\D_{pre}}{e}{\D_{post}} (a.k.a a Hoare triple \form{\{\D_{pre}\}e\{\D_{post}\}}), 
 where \form{\D_{pre}} denotes  pre-states, \form{e} is a program command
and \form{\D_{post}} depicts  post-states.
The engine aborts whenever an error 
 is met. 
Particularly, for each procedure \code{mn(\setvars{t})} with a body \form{e} in the program,
the encoder generates the following inductive definition:
 \form{\defpred~mn(\setvars{t},\res,\error)\equiv \constr}
 such that
 \htriple{\emp\wedge \true}{e}{\constr}.
 The arguments of each generated predicate
consist of parameters of the procedure and two additional variables: \form{\res}
and \form{\error}
where
\form{\res} encodes the return value of a procedure
and \form{\error} to capture error status.
In the following, we show how to obtain a flow-sensitive encoding.



We show the
 flow-sensitive encoding
at two levels: inter-procedure and intra-procedure.
%
The former is necessary
as it helps to avoid generating {\em infeasible} executions (e.g., a trace that never calls \code{sll} but executes \code{loop_{11}}).
Particularly,
we annotate every occurrence of inductive predicates,
e.g., \form{\seppred{\textit{main}}{\setvars{v}}}, with
two numbers, e.g., \form{\seppred{\textit{main}}{\setvars{v}}^o_u} where
\form{o} is the order of a callee in the calling context
and \form{u} is the number of unrolling (for
each loop or recursion, whose value is increased by one after being called).
For example, in the main procedure as \code{sll} is always executed
before the loop, the sequence number of \form{\textit{sll}} is 0
and that of \form{\textit{loop}_{11}} is 1.
These numbers ensure that if a pointer is first allocated in \code{sll}
and then accessed in the loop,
there is no memory error.
The unfolding numbers of inductive predicates
 in generated VCs
are initially assigned to 0 if it is not recursive
and 1 otherwise and updated by the solver
during its execution.


For a flow-sensitive intra-procedural
memory access encoding, 
we introduce  a
 local
 order, say \form{k^{th}},
in a sequence of memory accesses in a procedure over
 the three new predicates: 
\form{\sepnodeR{v}{f_i}{x}{k}},
\form{\sepnodeW{v}{f_i}{x}{k}}, and
 \form{\sepnodeD{v}{k}}.
For example, given \form{\D_{pre}} and let \code{e} be a
memory read \sm{x{=}v{\der}f_i},
{S2TD} generates
the following post-state \form{\D_1 {\vee} \D_2 {\vee} \D_3}:
\begin{enumerate}
\item \form{\D_1\equiv \D_{pre} {\wedge} \dangl{v} {\wedge}\textcolor{red}{\error{=}1}}
 encodes a memory
error when  \sm{v} is a dangling pointer. 
\item \form{\D_2\equiv \D_{pre} {\wedge} v{=}\nil{\wedge}\textcolor{red}{\error{=}1}}
 encodes a memory
error when  \sm{v} equals to \form{\nil}. 
\item \form{\D_3\equiv \D_{pre} {\wedge}  v{\neq}\nil {\wedge}
 \neg\dangl{v} {\wedge}\sepnodeR{v}{f_i}{x}{k}} encodes the memory
safety condition when  \sm{v} has been allocated in the pre-state.
\end{enumerate}
The post-states \form{\D_1}, \form{\D_2} and \form{\D_3} are complete and
 pairwise disjoint.
 If \form{\D_{pre}} implies that \form{v} is dangling, 
 \form{\D_1}
is satisfied, but neither \form{\D_2} nor \form{\D_3}.
If \form{\D_{pre}} implies that
 \form{v} has been assigned to \form{\nil}, \form{\D_2}
is satisfied, but neither \form{\D_1} nor \form{\D_3}. Both these scenarios above trigger a memory error.
Otherwise, \form{\D_{pre}} implies that
\form{v} is pointing-to a node, \form{\D_3}
is satisfied, but neither \form{\D_1} nor \form{\D_2} and there is no memory error.
\paragraph{Example Revisited}
The CHC for the program shown in Fig.~\ref{fig.mov.slldll}\,a) is presented in  Fig.~\ref{fig.mov.slldll}\,b).
Each procedure (or loop) is encoded as a predicate definition where
each disjunct corresponds to a program path in the procedure.
To generate a counter example, we keep track of the program paths
using labels. Particularly,
 each disjunct
is attached with a path label (after \form{:}) 1 for the \code{then} (or entering loop) branch and 2 for
the \code{else} (or exiting loop)
 branch. For example, for procedure \code{sll}, the first disjunct of its encoding corresponds to the
 \code{then}-branch
(with the path label [1])
 in Figure~\ref{fig.mov.slldll}, whereas the second disjunct corresponds to the
 \code{else}-branch
(with the path label [2]).

Procedure \code{main} is encoded with the
predicate \form{\textit{main}} whose first disjunct encodes the \code{then}-branch at line 9,
and the second disjunct encodes the else-branch.
The \code{while} loop at lines 11-15 is
encoded as the predicate \form{\textit{loop}_{11}} which
 has four disjuncts.
 The first disjunct encodes the branch
where the loop condition at line 11 does not hold.
The second one encodes the branch
where the loop condition holds and there is a dangling-dereference
error at the access \code{x{\der}val} at line 13.
The third disjunct encodes the if-branch at line 13, i.e., the memory access is safe and the error occurs. 
The last disjunct encodes the else-branch at line 13, i.e., there is no error.
We note that the disjunct corresponding
to null-dereference (i.e., \form{x=\nil}) is infeasible
(and has been discarded)
as it contradicts with the loop condition (i.e., \form{x\neq\nil}).
We further note that the numbers (\form{0} and
\form{1}) in the last disjunct on memory reads indicate that
the memory access on field \form{\textit{val}} must  happen
before the one on field \form{\textit{next}}.

\paragraph{Invariant Generation}
After CHCs are generated,
{\toolname} automatically infers for
each inductive predicate an over-approximate invariant using the abstract interpretation technique
shown in \cite{Loc:CAV:2016}. 
For example, the invariant of
\code{sll} is \form{i{\geq}0 {\wedge} \error{=}0}, which is generated through three steps.
First, {\toolname} introduces an unknown predicate \form{\code{P}(i{,}\res{,}\error)}
as a place-holder for the invariant 
 \form{\forall i{,}\res{,}\error {\cdot} \code{sll}(i{,}\res{,}\error) {\implies} \code{P}(i{,}\res{,}\error)}.
Secondly, in the induction step, it unfolds predicate \code{sll} in the left-hand
side prior to substituting  all
occurrences of the predicate \code{sll}
with the induction hypothesis above to obtain:
\begin{small}
\[
\begin{array}{l}
\form{(\emp {\wedge}\res{=}\nil {\wedge} i{=}0 {\wedge}\error{=}0 }
~{\vee}~  
\form{{\exists} r {\cdot} \sepnode{\res}{\textit{node}}{i{,}r} {\sep}} \form{
\seppred{\code{P}}{i{-}1{,}r{,}\error} {\wedge} i{\neq}0}) 
{\implies} \code{P}(i{,}\res{,}\error)
\end{array}
\] \end{small}
Lastly, it over-approximates
the heap (e.g., transform \form{\sepnode{\res}{\textit{node}}{i{,}r}} to \form{\res\neq\nil}):
\begin{small}
\[
\begin{array}{l}
\quad\form{(\res{=}\nil {\wedge} i{=}0 {\wedge}
\error{=}0 ~ {\vee}~{\exists} r {\cdot} \res{\neq}\nil {\wedge} \seppred{\code{P}}{i{-}1{,}r{,}\error} {\wedge} i{\neq}0)} 
~ \form{{\implies}~ \code{P}(i{,}\res{,}\error)}
\end{array}
\]\end{small}
This constraint is then passed
to a fixed point calculator (e.g., the one in \cite{Trinh:APLAS:2013})
to
compute the closure form for \code{P}. Similarly, it generates invariant $0{\leq}\res{\leq}1$ for \code{main} and $0{\leq}\error{\leq}1$ for \code{loop_{11}}.
Our encoder uses these invariants
to prune any infeasible CHC whose body is unsatisfiable.
For instance,
while encoding the procedure \code{main},
the engine prunes the following CHC \form{\horn_1}:
\form{\exists \error_1 {\cdot}\seppred{\textit{sll}}{n{,}x{,}\error_1}_{\_}^0
 {\wedge}n{\geq}0 {\wedge}\error{=}1 {\wedge}\error{=}\error_1 {\ass}
 \seppred{\textit{main}}{n{,}\res{,}\error}}.
Due to the inconsistency between
the sub-formula  \form{\error{=}1} and the over-approximating invariant
of the predicate \form{\seppred{\textit{sll}}{n{,}x{,}\error}} with
the sub-formula \form{\error{=}0}
of the body of \form{\horn_1}, this CHC is reduced into \form{\horn_1{:} \false {\ass}
 \seppred{\textit{main}}{n{,}\res{,}\error}}. Thus, it could be discarded.

\paragraph{Query Generation}
After generating the CHC, we reduce the verification problem
into a decision problem i.e., whether there exists a feasible error path starting
from \code{main} or not.
Indeed, this encodes the ``liveness'' property, to ask whether ``something good eventually happens''
where ``good'' is an error.
Particularly to the example, the problem is encoded as
 \form{\D_0{\equiv} \seppred{\textit{main}}{n_0{,}\res_0{,}\error}^0_0 {\wedge}\error{=}1}.
The program contains an error if it contains a feasible {\em error} path i.e.,
there exists a satisfiable {\em base} formula
of the form \form{\base {\wedge} \textcolor{red}{\error{=}1}} (where \form{\base {\wedge}\error{=}1 \not\equiv\false})
derived from \form{\D_0}.
 Otherwise,
the program is \textcolor{blue}{safe} 
(i.e.,
every base formula derived from \form{\D_0} is equivalent to \form{\false {\wedge} \textcolor{red}{\error{=}1}}).
\subsection{Decision Procedure {\algonameesl}}\label{mov.solver}
%
Central to {{\algonameesl}} is the construction of a cyclic execution tree
based on which {{\algonameesl}} aims to decide whether \form{\D_0} is satisfiable.
Formally, a cyclic execution tree \form{\utree{i}} is 
 a tuple \form{(V, E, \backfun)} such that
 $V$ is a finite set of nodes representing formulae, which in turn represents execution paths;
 $E$ is a set of edges such that \form{(\D, \D') \in E}
 means
 that we can unroll a procedure (or a loop) in \form{\D},
 execute it and obtains \form{\D'} as the post path conditions;
and $\backfun$ is a back-link (partial) function which captures virtual cycles in the tree.
 A cycle, e.g. \form{\ctree{\D_c}{\D_b}{\sub}}, is in $\backfun$ if
the leaf \form{\D_b} is linked back to its ancestor \form{\D_c}
via a substitution \form{\sub} s.t.
 \form{\D_c ~{\equiv}~ \D_b\sub}.
In this back-link, \form{\D_b} is referred as a {\em bud}
and \form{\D_c} a {\em companion}.
Given a formula (e.g., \form{\D_0}), we can systematically construct a cyclic execution
tree $(V, E, \backfun)$ such that $V$ contains only formulae
derived from the root.
A leaf node is marked open or closed.
A closed leaf in the tree is either an unsatisfiable formula,
or a satisfiable {\em base} formula 
\underline{satisfying} \form{\error{=}1},
 or one is linked back to an interior node.


Given \form{\D_0}, {{\algonameesl}} constructs a cyclic execution tree which 
either contains a closed leaf representing a satisfiable base formula or contains only closed leaves which are either unsatisfiable or linked back.
Let us illustrate this via our earlier example.

\noindent{\bf Example Revisited}
{{\algonameesl}} starts with the encoding of  \code{main}:
\form{\D_0{\equiv} \seppred{\textit{main}}{n_0{,}\res_0{,}\error}^0_0  {\wedge}\error{=}1
}.
The
 execution tree of \form{\D_0} is shown in Fig.~\ref{fig.unfold.tree}, where
 the unsatisfiable nodes are underlined.
\form{\D_0} has two children, \form{\D_{11}} and \form{\D_{12}}, 
obtained by unrolling predicate \form{\textit{main}}.
\saveone\[\saveone
\begin{small}
\begin{array}{rclrcl}
\form{\D_{11} & {\equiv} & \emp {\wedge}n_0{<}0 {\wedge} \res_0{=}0 {\wedge} \underline{\error{=}0} {\wedge} \underline{\error{=}1}
} &\quad
\form{\D_{12} & {\equiv} & \seppred{\textit{sll}}{n_0{,}x_0{,}\error_1}_{1}^0
{\sep} \seppred{\textit{loop}_{11}}{x_0{,}x{,}\error_2}_{1}^1 {\wedge}n_0{\geq}0 {\wedge}\pure_{\error}} 
\end{array}
\end{small}
\]
\begin{wrapfigure}{l}{0.52\textwidth}
\savespace  \savespace 
\begin{scriptsize}
\begin{center}

\begin{tikzpicture}[node distance=18mm,level 1/.style={sibling distance=62mm},
      level 2/.style={sibling distance=27mm},
                        level distance=22pt, draw]
  \tikzstyle{every state}=[draw,text=black]

\node (A)                    {{$\D_0$}};
  \node         (B1) [below left=4mm and 25mm of A] {\underline{$\D_{11}$}};
  \node         (B2) [below =4mm of A] {\textcolor{blue}{$\D_{12}^\bigstar$}};
  \node         (D1) [below left=3mm and 10mm of B2] {{$\D_{21}$}};
  \node         (D2) [below right=3mm and 10mm of B2] {{$\D_{22}$}};
  \node         (F1) [below left=3mm and 5mm of D1] {\underline{$\D_{31}$}};
  \node         (F2) [below =3mm of D1] {\underline{$\D_{32}$}};
  \node         (F3) [below right=3mm and 5mm of D1] {\underline{$\D_{33}$}};
   \node         (F4) [below left=3mm and 5mm of D2] {\underline{$\D_{41}$}};
  \node         (F5) [below =3mm  of D2] {\underline{$\D_{42}$}};
  \node         (F6) [below right=3mm and 5mm of D2] {\textcolor{blue}{$\D_{43}^{\bigstar}$}};
  ;

  \path (A) edge              node {} (B1)
            edge              node {} (B2)
        (B2) edge              node {} (D1)
            edge              node {} (D2)
        (D1) edge              node {} (F1)
            edge              node {} (F2)
            edge              node {} (F3)
        (D2) edge              node {} (F4)
            edge              node {} (F5)
            edge              node {} (F6)
        (F6) edge [->,bend right=45,dotted]  node {} (B2)
            ;
\end{tikzpicture}
\caption{Execution Tree $\utree{}$.} \label{fig.unfold.tree}
\end{center}
\end{scriptsize}
\savespace  \savespace 
\end{wrapfigure}%
where \form{\pure_{\error} {=} \error=\error_2 {\wedge}
 \error{=}1 }.
For simplicity, existentially quantified variables are skolemized and path traces (i.e., the labels to mark program paths for witness generation) are discarded. For a flow-sensitive analysis,
the unfolding number of  predicates \form{\textit{sll}} and \form{\textit{loop}_{11}}
in \form{\D_{12}} is set to be greater than
 the unfolding number
of predicate \form{\textit{main}} in \form{\D_0}, and its order
is set to be the sum of the order of the predicate
in the definition of \code{main} (i.e., 0 for \form{\textit{sll}}
and 1 for \form{\textit{loop}_{11}}) and  the order of predicate \textit{main}
in \form{\D_0} (i.e., 0).
\form{\D_{11}} is marked closed as it is  unsatisfiable. The unsat core of \form{\D_{11}}
is underlined above. \form{\D_{12}} is then expanded to obtain two children, \form{\D_{21}} and \form{\D_{22}}, 
obtained by unrolling
 predicate \form{\textit{sll}}.
%
\[
\begin{small}
\begin{array}{l}
\form{\D_{21} ~{\equiv}~ x_0{=}\nil {\wedge} n_0{=}0 {\wedge}\error_1{=}0
{\wedge} \seppred{\textit{loop}_{11}}{x_0{,}x{,}\error_2}_{1}^1 {\wedge} 
\,
n_0{\geq}0 {\wedge}} \form{ \error_1{=}0 {\wedge} \pure_{\error}} \\
\form{\D_{22} ~ {\equiv}~ \sepnode{x_0}{\textit{node}}{n_1{,}r_2} {\sep}
\seppred{sll}{n_1{,}r_2{,}\error_1}_{2}^0 {\sep}\seppred{\textit{loop}_{11}}{x_0{,}x{,}\error_2}_{1}^1}
 {\wedge} \form{n_0{\neq}0{\wedge} n_1{=}n_0{-}1 {\wedge}n_0{\geq}0
 {\wedge} \pure_{\error}}
\end{array}
\end{small}
\]
\form{\textit{sll}} is chosen for unfolding rather than \form{\textit{loop}_{11}}
as they both have the same unfolding number and the sequence number
of the former (0) is smaller than that of latter (1).
This ensures the flow-sensitiveness of the execution.
%
Similarly, \form{\D_{21}} has three children obtained by unrolling
predicate \form{\textit{loop}_{11}} as follows.
\saveone\[\saveone
\begin{small}
\begin{array}{ll}
\form{\D_{31} {\equiv} & \emp {\wedge} x_0{=}\nil {\wedge} n_0{=}0 {\wedge} \error_1{=}0
 {\wedge}
x_0{=}\nil  {\wedge}
  x{=}x_0 {\wedge}}
 \form{  \underline{\error_2{=}0} {\wedge}} \form{  n_0{\geq}0
{\wedge}
\underline{\pure_{\error}} 
} \\
%
%
\form{\D_{32}  {\equiv} & \emp {\wedge} \underline{x_0{=}\nil} {\wedge} n_0{=}0 {\wedge}\error_1{=}0
{\wedge}{\neg}\dangl{x_0}{\wedge} \underline{x_0{\neq}\nil} 
 {\wedge}} \form{\sepnodeR{x_0}{val}{v_1}{0} 
{\wedge} x{=}x_0 {\wedge} v_1{<}0{\wedge}n_0{\geq}0
{\wedge}
  \pure_{\error}}\\
\form{\D_{33}  {\equiv}& \seppred{\textit{loop}_{11}}{x_1{,}x{,}\error_2}_{2}^0{\wedge} \underline{x_0{=}\nil} {\wedge} n_0{=}0 {\wedge}\error_1{=}0
{\wedge}
 {\neg}\dangl{x_0}}\\
  &
 \form{ {\wedge} \underline{x_0{\neq}\nil} {\wedge}} 
\form{  \sepnodeR{x_0}{val}{v_1}{0} 
 {\wedge}v_1{\geq}0}{\wedge}
 \sepnodeR{x_0}{next}{x_1}{1} 
 {\wedge} \form{n_0{\geq}0
  {\wedge}  \pure_{\error}}
\end{array}
\end{small}
\]
 As \form{\D_{31}}, \form{\D_{32}}
 and \form{\D_{33}} 
 are unsat 
they are marked as closed. 
%
\form{\D_{22}} has three children obtained by unrolling
 predicate \form{\textit{loop}_{11}}. 
 \form{\seppred{\textit{loop}_{11}}{x_0{,}x{,}\error_2}_{1}^1}
is chosen but not \form{\seppred{\textit{sll}}{n_1{,}r_2{,}\error_1}_{2}^0} as the former has a
 smaller
unfolding number (i.e., 1) than the latter (i.e., 2).
\saveone\[\saveone\begin{small}
\begin{array}{rll}
\form{\D_{41} ~{\equiv} & \sepnode{x_0}{\textit{node}}{n_1{,}r_2} {\sep}
\seppred{\textit{sll}}{n_1{,}r_2{,}\error_1}_{2}^0
{\wedge} x_0{=}\nil  {\wedge} x'{=}x {\wedge}
  \underline{\error_2{=}0}
 {\wedge}n_0{\neq}0{\wedge} n_1{=}n_0{-}1}
 \form{
 {\wedge}n_0{\geq}0
{\wedge}  
\underline{ \pure_{\error}}
} \\
\form{\D_{42} ~{\equiv} & \sepnode{x_0}{\textit{node}}{n_1{,}r_2} {\sep}
\seppred{\textit{sll}}{n_1{,}r_2{,}\error_1}_{2}^0{\wedge}
{\neg}\dangl{x_0}
  {\wedge}x_0{\neq}\nil {\wedge}} \\
& \form{
\underline{
v_1{=}n_1
}
{\wedge} \underline{v_1{<}0} 
}
 \form{ {\wedge}\underline{n_0{\neq}0}}{\wedge}  \underline{n_1{=}n_0{-}1}
 {\wedge} \underline{n_0{\geq}0} {\wedge} 
 \form{ \pure_{\error}} \\
\form{\D_{43} ~{\equiv} & 
\seppred{\textit{sll}}{n_1{,}r_2{,}\error_1}_{2}^0
 {\sep}
\seppred{\textit{loop}_{11}}{x_1{,}x{,}\error_2}_{2}^1{\wedge}x_0{\neq}\nil {\wedge} v_1{\geq}0{\wedge}}\\
 &
 \form{
 \form{ v_1{=}n_1 
 {\wedge}
}  \form{x_1{=} r_2 
{\wedge}} n_0{\neq}0{\wedge} n_1{=}n_0{-}1 {\wedge}
 } \form{n_0{\geq}0
{\wedge}
}
  \pure_{\error}
\end{array}
\end{small}
\]
 \form{\D_{41}} and \form{\D_{42}} are unsatisfiable. 
 {\algonameesl} detects that 
 \form{\D_{43}} 
is \emph{subsumed} by \form{\D_{12}}. {\algonameesl} then adds a back-link from \form{\D_{43}} to
\form{\D_{12}} to form a {\em  cyclic proof}~\cite{Loc:CAV:2016}. Intuitively, this back-link means that although the path from \form{\D_{12}} to \form{\D_{43}} can be infinitely unrolled, there is no error detected this way and therefore we can stop unfolding \form{\D_{43}}. 
As the cyclic unfolding tree in Fig.~\ref{fig.unfold.tree} is closed,
 our system proves that the program is safe.


\section{CHC Encoding} \label{sec:encoding}

In this section, we present details on how we encode programs into CHC.
The syntax of
\begin{wrapfigure}{l}{0.6\textwidth}
   \saveone
\begin{center} 
\savespace\[\savespace
\begin{small}
\begin{array}{rcl}
\myit{prog} & ::= & \myit{datat}^* ~\myit{meth}^* \qquad
\myit{datat}  ::= \btt{data} ~c ~\{~ \myit{field;}^* ~\} \\
\form{\myit{proc} & ::= & t ~ \myit{mn}~((t~v)^*)~\{e\}} \\
 \myit{field} & ::= & t~v  \qquad 
 t ~::=~ c ~|~ \tau \qquad  
 \tau ~ ::= ~ \int ~|~ \bool \\ 
e & ::= & \code{NULL}~| 
 k^{\tau}~| 
 v~| 
 v~{:=}~e ~| 
 t~v~|~ 
 \myit{mn}(v_1,..,v_n)~|~ \\
& & 
 \code{assume({\form{\pure}}}) \mid 
 v{\der}f ~|~ 
 \btt{new}~c(v_1,..,v_n) ~|~ 
 \btt{free}~v ~|~\\ 
& & v_1{\der}f {:=} v_2 ~|~ e_1;e_2 ~| 
 \myif{v}{e_1}{e_2}   
\end{array}
\end{small}
\] 
\vspace*{-4mm}
\caption{A core language ($t$: a type; $v$: a variable)}\label{fig.syntax} 
\end{center} 
\savespace
\end{wrapfigure}

 a core program $\myit{prog}$ is defined in Figure~\ref{fig.syntax}. 
A program consists of multiple data structures and procedures. Each data structure $\myit{datat}$ is composed of multiple fields
each of which $\myit{field}$ is composed of a type $t$ and a name $v$. 
Each procedure $\myit{proc}$ is composed of a return type $t$, a procedure name $mn$, multiple parameters (each of which has a type and a name) and a command $e$.
We assume there is a special procedure called \code{main} which is the entry point of the program.
We note that our core language does not include loop
as all \code{while} loops have automatically been transformed into
 {\em tail}-recursive procedures by the frond-end.


Our encoding is built upon forward symbolic execution for separation logic.
The proposed engine takes a core program as input, symbolically executes
it in a bottom-up manner, 
and produces  a system of CHC.
The execution rule for each command \code{e} is formalized as
 follows:
\begin{small}{\htriple{\D_1}{{\it e}}{\D_2}}\end{small}
where \form{\D_1} and \form{\D_2} are its precondition and postcondition, respectively.
%
A procedure \code{mn} is encoded by the predicate \form{\textit{mn}} as:
\hide{
\[\saveone
\begin{small}\hlrnone{METH}{ V{=}\{v_1{,}..{,}v_n{,}\res{,}\error\} \quad
 \htriple{\emp{\wedge}\nochange(V){:}[]}{e}{\bigvee \D_i{:}l_i} \\
  \myit{W}{=}\primeV(V) \quad \defpred~ \seppred{ \myit{mn}}{v_1{,}..{,}v_n{,}\res{,}\error}
 {\equiv} {(\bigvee ~{\exists} \myit{W}{\cdot}\D_i^e{:}l_i)\,} }{\vdash
t_0~\myit{mn}(t_1~v_1,..,t_n~v_n)~\{e\}}
\end{small}\]}
\saveone\[\saveone
\begin{small}\hlrnone{METH}{ V{=}\{\bar{v}{,}\res{,}\error\} \quad
 \htriple{\emp{\wedge}\nochange(V){:}[]}{e}{\bigvee \D_i{:}l_i} \quad
  \myit{W}{=}\primeV(V)  }{
t_0~\myit{mn}(\overline{t~v})~\{e\}
 \yields \defpred~ \seppred{ \myit{mn}}{\bar{v}{,}\res{,}\error}
 {\equiv} {(\bigvee ~{\exists} \myit{W}{\cdot}\D_i^e{:}l_i)\,}
}
\end{small}\]
 The list of arguments of the predicate \form{\textit{mn}}
includes the set of parameters of \code{mn}
and two preserved variables: \form{\res} for
output
and \form{\error} for error status (initially \form{\error{=}0}).
Each program path in \code{mn} corresponds
 to a disjunct constituting the predicate  \form{\textit{mn}}.
Function
$\primeV(V)$ returns the fresh form of variables in
\sm{V} which capture the symbolic values of the inputs.
$\nochange(V)$ returns equalities that assign
variables in $V$ to the symbolic values in $\primeV(V)$.
Furthermore, for the symbolic execution of each procedure, we assume that
the engine maintains a pair of two numbers \form{n_l}
and \form{k}, both initially assigned to \form{0},  where
 \form{n_l} is the next procedure call number
(used to preserve the order the function calls),
and  \form{k}
for the next memory access sequence number
 (used to preserve the order of the memory
accesses).
 Every  pointer parameter ${v_i}$
is initially set to be dangling, with the constraint \form{\dangl{v_i}}
conjoined 
with the precondition before executing.
Part of the rules for forward symbolic execution are presented in Fig. \ref{fig.ntyperules}, the
remaining rules are shown in the App. 
The engine halts when it detects an error in the precondition (rule \rulen{ERR}). 
Afterwards, it produces a disjunctive formula
(\form{\bigvee \D_i}) that
precisely captures the post-states of the function
where each \form{\D_i} is a program path in the procedure. 

%

\begin{figure*}[t]
{\scriptsize \begin{center}
 \begin{minipage}{\textwidth}
 \begin{frameit}\vspace*{-2.5mm}\savespace
 
\[
\hlr{NEW}{}
{\htriple{\D{:}l}{\new~c(\setvars{v})}{(\D\,{\sep}\,\sepnodeF{\res}{c}{\setvars{v}}){:}l}}
\quad
\hlrside{CALL-ERR}{
    }
{\htriple{\D{:}l}{mn(\setvars{x})}{\D {\sep}  \seppred{ \myit{mn}}{\setvars{x}{,}\res{,}\error}^{n_l} {\wedge}\error{=}1 {:}l}}{\form{\dagger}}
\]
\[
\hlr{ASSIGN}{\htriple{\D{:}l}{e}{\D_1{:}l_1}}
 {\htriple{\D{:}l}{v{:=}e}{\exists \res{,}
v'{.}~(\D_1[v'/v] {\wedge} v{=}\res){:}l_1}}
~~
 \quad
\hlrside{CALL-OK}{
  }
{\htriple{\D{:}l}{mn(\setvars{x})}{\exists v'{.}~\D[v'/\error] {\sep}  \seppred{ \myit{mn}}{\setvars{x}{,}\res{,}\error}^{n_l} {:}l}}{\form{\dagger}}
\]
\[
\hlr{FREE-ERR1}{
}
{\htriple{\D{:}l}{\code{free}~v}{(\D[\error'/\error]{\wedge}v{=}\nil{\wedge}\error{=}1){:}l}}
\qquad
\hlr{FREE-ERR2}{
}
{\htriple{\D{:}l}{\code{free}~v}{(\D[\error'/\error]{\wedge}\dangl{v}{\wedge}\error{=}1){:}l}}
\]
\[
\hlr{ERR}{
}
{\htriple{\D\wedge\error{=}1{:}l}{e}{\D\wedge \error{=}1{:}l}}
\quad 
\hlrside{FREE-OK}{
}
{\htriple{\D{:}l}{\code{free}~v}{ (\D[v'/v]{\wedge}{\neg\dangl{v'}}{\wedge}v'{\neq}\nil{\wedge}\sepnodeD{v'}{k}
{\wedge}\dangl{v} ){:}l}}
{\form{\ddagger}}
\]
\[
\hlr{LOAD-ERR1}{
}
{\htriple{\D{:}l}{v{\der}f_i}{(\D[\error'/\error]{\wedge}v{=}\nil{\wedge}\error{=}1){:}l}}
\qquad
\hlr{LOAD-ERR2}{
}
{\htriple{\D{:}l}{v{\der}f_i}{
    (\D[\error'/\error]{\wedge}{\dangl{v}}{\wedge}{\error{=}1} ){:}l}}
\]
\[
\hlrside{LOAD-OK}{
}
{\htriple{\D{:}l}{v{\der}f_i}{
     (\D{\wedge}  {\neg\dangl{v}}{\wedge}v{\neq}\nil{\wedge}
\sepnodeR{v}{f_i}{\res}{k} ){:}l}}{\form{\ddagger}}
\]
\[
\hlr{STORE-ERR1}{
}
{\htriple{\D{:}l}{v{\der}f_i{:=}v_0}{(\D[\error'/\error]{\wedge} v{=}\nil{\wedge}\error{=}1){:}l {:}l}}
\qquad
\hlr{STORE-ERR2}{
}
{\htriple{\D{:}l}{v{\der}f_i{:=}v_0}{
(\D[\error'/\error]{\wedge}{\dangl{v}}{\wedge}{\error{=}1} ){:}l }}
\]
\[
\hlrside{STORE-OK}{
}
{\htriple{\D{:}l}{v{\der}f_i{:=}v_0}{ (\D{\wedge} {\neg\dangl{v}}{\wedge}v{\neq}\nil{\wedge}
\sepnodeW{v}{f_i}{v_0}{k} ){:}l}}
{\form{\ddagger}}
\]
 \end{frameit}
 \end{minipage} \savespace
\caption{Encoding CHC where \form{v'},
\form{\error'} are fresh, $\dagger{:}$ \text{ increase }\form{n_l}, and $\ddagger{:}$ \text{ increase }\form{k}.
}\label{fig.ntyperules}
\end{center}}
\end{figure*}

A function/procedure call \myit{mn} is encoded by
two disjoint cases (the rules \rulen{[CALL-OK]} for normal
returning call and \rulen{[CALL-ERR]} for error returning call)
 through
an occurrence of the inductive predicate
with an increasing order number \form{n_l}.
This  predicate occurrence
is spatially conjoined (\form{\sep}) into the precondition of the triple.
The correctness of this spatial conjunction \form{\sep} is as follows.
If \form{\myit{mn}} allocates new heaps, 
 the spatial conjunction states correctly the separation between the new heaps
and the existing heap region in the precondition.
Otherwise, \form{\myit{mn}} includes only pure constraint (e.g.,
 \form{\emp{\wedge}\pure}),
the correctness is ensured by
the axiom:
 \form{(\heap_1{\wedge}\pure_1){\sep}(\emp{\wedge}\pure)
\Leftrightarrow  \heap_1{\wedge}\pure_1{\wedge}\pure} \cite{Reynolds:LICS02}.
 \form{n_l} captures the order of function calls
and is essential for {\em flow-sensitive} analysis. The flow-sensitiveness is ensured by
the decision procedure {\algonameesl} such that
{\algonameesl} uses this number to choose an inductive predicate
for unfolding in a breadth-first manner.
The encoding of memory accesses is one of our main contributions.
\code{free} is encoded in a way such that it aids for both safety and double free error discovery.
Essentially, \form{\code{free}~v} is erroneous if
the precondition \form{\D} implies  either
\form{v} is
 a dangling pointer (\form{\dangl{v}}) or
\form{v} has been assigned to \code{NULL} (\form{v{=}\nil}).
\form{\code{free}~v}  is safe if
\form{\D} implies (\form{\sepnodeF{v}{c}{...}}).
The key point to detect double free errors is that the
post-condition of the command \form{\code{free}~v}  must ensure
that the latest value of \form{v}
 points to a dangling heap, i.e. \form{\dangl{v}}.


For memory safety of free and memory access commands (load and store)
 over the heaps, we delegate the binding to the
 logic layer where symbolic path traces include explicit points-to predicates.
In the cases of memory safety, we add 
predicates with explicit binding notation, \form{\sepnodeR{v}{f_i}{x}{k}}
for memory read and
 \form{\sepnodeW{v}{f_i}{x}{k}} for memory write,
into program states
and postpone the binding to the normalization
 of the error/safe decision procedure.
 (We recap that \form{k} is a globally increasing order
among
 load/store/free operations over the heaps.)
Constraints used to encode safety and 
pointer dereference errors
are analogous to the encoding of the \code{free} command.

\section{Decision Procedure} \label{sec:solving}
{\algonameesl},
the decision procedure in {\toolname},
 is presented in Algorithm~\ref{algo.s2.sat}.
It takes a formula $\D$ and a place holder \code{[\,]} for a counterexample as inputs. It has two possible outcomes: 
 {\sat} with a counterexample \form{\xi} (i.e., a sequence of statements
starting from the entry point to the error statement)
 or {\unsat} with a cyclic proof. 
 If it does not terminate after some threshold time units, we mark the result as \code{UNKNOWN}.

At line 1, an execution tree with only one root node $\D$ is constructed. The loop (lines 2-11) then iteratively grows the tree while attempting
to establish a {\sat} proof or a {\unsat} proof.
In particular, at line 3, it checks whether
there is an error or not using a function \code{\uosl}(\utree{}).
 Function {\uosl} focuses on those leaf nodes which do not contain inductive predicates (referred to as base formulas).
If one of the leaf nodes is proved to be satisfiable, the algorithm returns {\sat} (line 4) together with the satisfiable node
 as a counterexample (obtained by using function \code{get\_error}). Otherwise, those leaf nodes
are marked as closed.
 At line 5, function {\oosl} prunes infeasible execution traces.
Afterwards, every remaining leaf node is checked to see if it can be linked-back through function {\lbsl}.
At line 7, it checks whether the tree is closed (i.e. all leaf nodes are either marked closed or can be linked-back) and returns {\unsat} if so.
Otherwise, at line 9, it chooses a leaf which is neither marked closed nor linked-back and grows the tree by unfolding the inductive predicate in the node using function \code{\unfoldsl}. Afterwards, the same process is repeated until either {\sat} or {\unsat} is returned or a timeout occurs.
In the following, we describe \code{\oosl}, \code{\unfoldsl}. \code{\uosl} and \code{\lbsl}  in more detail.

 \begin{algorithm}[t]\begin{small}
  \SetKwInOut{Input}{input}\SetKwInOut{Output}{output}
  \SetAlgoLined

 \Input{($\D${:}[])} \Output{{(\sat, $\xi$)} or {\unsat}}
   $\utree{} {\leftarrow} (\D {:}[]) $
\tcc*{initialize root}

  \While { $\true$ }
 {
    (\code{is\_error},\utree{}) {$\leftarrow$} \code{\uosl}(\utree{})
 \tcc*[r]{base case}

   {\bf if} {\code{is\_error}} {\bf then }
   {\Return{(${\sat}$, \code{get\_error}(\utree{}))}}; 

     $\utree{} {\leftarrow} \code{\oosl}(\utree{})$
\tcc*[r]{prune infeasible}

      $ \utree{}  {\leftarrow} \code{\lbsl}(\utree{}) $
\tcc*[r]{induction}

      {\bf if} {\code{is\_closed}(\utree{})} {\bf then } {\Return{{\unsat} };} 

	 \Else{

       $ \utree{}  {\leftarrow} \code{\unfoldsl}(\utree{}) $\tcc*{expand}
      }
 }
 \caption{Decision Algorithm ~{\algonameesl}}\label{algo.s2.sat} 
\end{small}\end{algorithm}


\paragraph{Function {\oosl}}
\code{\oosl} checks
unsatisfiability of every open leaf node using {\algonamesl}. 
First, {\algonamesl} 
replaces every
inductive predicate
by its over-approximated {\em invariant}
to obtain bases formulas.
If the base formula does not contain an (over-approximated) error
(e.g., \form{\error{=}1}), it checks whether
the base is unsatisfiable and marks it closed if it is.

\paragraph{Function {\unfoldsl}} {\unfoldsl} first finds an
open leaf node \form{\D} in the tree
 in a breadth-first order.
 If \form{\D} contains multiple occurrences of inductive predicates,
the occurrence \form{\seppred{\code{P}}{\setvars{v}}_u^o}
with the smallest \code{u} (i.e. the number of unfoldings) is selected.
If there are more than one occurrences that have the same
smallest number of unfoldings, the one with the smallest \code{o} is selected.
After that, it unfolds \form{\seppred{\code{P}}{\setvars{v}}_u^o} by spatially conjoining
branches of the instance \form{\seppred{\code{P}}{\setvars{v}}} into the formula.
This step also combines the path traces of the branches with the present path trace.
For example,
suppose \form{\D} be
 \form{{\exists} \setvars{w}_0 {.}~ \heap {\sep} \seppred{\code{P}}{\setvars{v}}_u^o {\wedge} \pure}
and \form{\seppred{\code{P}}{\setvars{v}} \equiv {\bigvee} (\D_i~\trace_i)}, then the unfoldings is  \form{{\exists} \setvars{w}_0 {.}~ \heap {\sep} {{\bigvee} (\D_i~\trace_i)} {\wedge} \pure}. The set of new leaves are: $\{{ ({\exists} \setvars{w}_0 {\cdot} \heap {\sep} \D_i {\wedge} \pure){:}
 (l{::}(\code{P}{,}\trace_i))} \}$.
In the illustrative example, the interprocedural path
corresponding to
the leaf \form{\D_{33}} 
 is \code{[(main, [2]); (sll,[1]);[(loop_{11},[1;1])]]}.
This trace captures the program path from the entry of the procedure \code{main}
to the \code{else} branch, call function \code{sll} once (\code{then} branch)
and to the body of the \code{while} loop once (\code{else} branch).
%
For a flow-sensitive analysis,
 the unfolding and order numbers are updated as follows.
Let \form{\code{Q}(\setvars{t})^{o_l}} be an occurrence of
an inductive predicate in a branch
of the definition of \code{P}. Its unfolding
number is set to \code{u{+}1} and
its order number to \form{o_l{+}o}.

\paragraph{Normalization}
Finally, the unfolded formula is normalized to
 remove all \code{ST}/\code{LD}/\code{DEL}
predicates. 
First, it detects and prunes infeasible cases. 
Based on the semantics described in Sect. \ref{sec:assert},
  we propose axioms on the memory access predicates
  as follows.
  \begin{small}
 \saveone \[\saveone
\begin{array}{lll}
\sepnodeR{v}{f}{\anon}{\anon} {\wedge} \dangl{v}   \Leftrightarrow \false &
~~ \sepnodeR{v}{f}{\anon}{\anon}  {\wedge} v{=}\nil  \Leftrightarrow \false &
~~ \sepnodeD{v}{\anon} {\wedge} v{=}\nil   \Leftrightarrow \false
\\
\sepnodeW{v}{f}{\anon}{\anon} {\wedge} \dangl{v}  \Leftrightarrow \false  &
~~ \sepnodeW{v}{f}{\anon}{\anon} {\wedge} v{=}\nil   \Leftrightarrow \false &
~~\sepnodeD{v}{\anon} {\wedge} \dangl{v}  \Leftrightarrow \false 
\end{array}
 \]\end{small}
After that, memory accesses are replayed over
the remaining satisfiable paths.
 Given a sequence of \code{ST}/\code{LD}/\code{DEL}
 memory predicates updating on the same field of a pointer,
it first sorts them based on  the ordering (i.e., \form{k}) numbers
and then performs the following normalization
 in the increasing order of the \form{k} numbers.
For a load \form{\sepnodeR{x}{f_i}{y}{k}}, it binds \form{y}
to the variable corresponding to the field \form{f_i}
of the points-to predicate \form{x}: 
\saveone\[\saveone\begin{small}
\begin{array}{l}
{\exists} \setvars{w} {.}\heap {\sep} \sepnodeF{x}{c}{f_1{:}v_1{,}..{,}f_i{:}v_i{,}...{,}f_n{:}v_n}  {\wedge}  {\neg\dangl{x}}{\wedge}
 \sepnodeR{x}{f_i}{y}{k} {\wedge} \pure\\
\qquad  {\Leftrightarrow}~{\exists}  \setvars{w} {.}\heap {\sep} \sepnodeF{x}{c}{f_1{:}v_1,..,f_i{:}v_i,...f_n{:}v_n} {\wedge}
 y{=}v_i {\wedge} \pure
\end{array}
\end{small}
\]
If the points-to predicate \form{\sepnodeF{x}{c}{..}} does not exist, either \form{\dangl{x}{\wedge}{\error{=}1}}
or \form{x{=}\nil{\wedge}{\error{=}1}} must exist
and hence,
 \form{\D} must have been marked as closed already. 
%
%
To eliminate 
\form{\sepnodeW{x}{f_i}{y}{k}}, we generate
a fresh variable \form{v_i'} and then bind \form{y}
to \form{v_i'}
 corresponding to the field \form{f_i}
of the points-to predicate \form{x}: 
\begin{small}
\form{
{\exists}  \setvars{w} {.}\heap {\sep} \sepnodeF{x}{c}{f_1{:}v_1,..,f_i{:}v_i,...f_n{:}v_n} {\wedge}{\neg\dangl{x}} {\wedge}
\sepnodeW{x}{f_i}{y}{k} {\wedge} \pure}
\form{\Leftrightarrow} \form{~{\exists}  \setvars{w} {.} (\heap {\sep} 
 \sepnodeF{x}{c}{f_1{:}v_1,..,f_i{:}v_i,...f_n{:}v_n} {\wedge} \pure) [v_i/v_i']
 {\wedge}
 v'_i{=}y }.
\end{small}
Similarly, a deallocation is performed as
\begin{small}
\form{
{\exists}  \setvars{w} {.}\heap {\sep} \sepnodeF{x}{c}{f_1{:}v_1,..,f_i{:}v_i,..,f_n{:}v_n} {\wedge}
 \sepnodeD{x}{k} {\wedge} {\neg\dangl{x}} {\wedge} \pure 
 {\Leftrightarrow}~{\exists}  \setvars{w} {.} \heap {\wedge} \pure}
\end{small}.


\paragraph{Base Case} Given an unfolding tree \utree{}, \code{\uosl} 
discharges every base formula at  open leaf nodes.
%
 Let \form{\base} be a base formula.  
{\uosl} transforms
\form{\base} into a SMT formula through
a reduction function, called \form{\xpure}.
We emphasize, applying any of the reductions presented
in \cite{Brotherston:LISC:14,Loc:CAV:2016,Makoto:APLAS:2016,Gu:IJCAR:2016,Xu:CADE:2017,Le:CAV:2017}
which do not take dangling pointers into account would obtain
an over-approximated abstraction.
Let \form{\base} be
\form{\exists \setvars{w}{\cdot}~
\heap
{\wedge}}\form{ \pure_{ptr}{\wedge}
\pure_{a}} where \form{\pure_{ptr}} is a conjunction
of equalities and dangling predicates over pointer-typed
variables and \form{\pure_{a}} is an arithmetical constraint.
Let \form{\setvars{v}} be the set of
variables
in \form{\base}. We assume that all these variables
are sorted. And \form{M(\setvars{v}_i)}
is a unique integer variable corresponding
to \form{\setvars{v}_i}, the \form{i^{th}} variable in \form{\setvars{v}}.
\form{\pure\equiv \xpure(\base)} is computed as:
(1) if
 \form{\sepnodeF{\setvars{v}_i}{c}{\anon} \in \heap} or
\form{\dangl{\setvars{v}_i} \in \pure_{ptr}}, then
 \form{M(\setvars{v}_i){=}i \in \pure}; 
(2) if \form{\setvars{v}_i{=}\nil \in \pure_{ptr}}, then
\form{M(\setvars{v}_i){=}0 \in \pure};
(3) if \form{\sepnodeF{\setvars{v}_i}{c}{\anon} \sep
\sepnodeF{\setvars{v}_j}{c'}{\anon} \in \heap}, then
\form{M(\setvars{v}_i){\neq} M(\setvars{v}_j)  \in \pure};
(4) if \form{\sepnodeF{\setvars{v}_i}{c}{\anon} \in \heap}
and \form{\dangl{\setvars{v}_j} \in \pure_{ptr}}, then
\form{M(\setvars{v}_i){\neq} M(\setvars{v}_j)  \in \pure};
(5) if \form{\setvars{v}_i{=}\setvars{v}_j \in \pure_{ptr}}, then \form{\setvars{v}_i{=}\setvars{v}_j  \in \pure}. Similarly to
\form{\setvars{v}_i{\neq}\setvars{v}_j};
and \form{\pure_{a} \in \pure}.
(1) captures the semantics of the domain of the heaps;
(2) asserts non-null heap addresses;
(3) and (4) capture the semantics of the separation
of heap locations; (5) and (6) forwards
 the constraints on the stack holding by \form{\base}.
By induction, we show the following lemma.
\begin{lemma}\label{lem.sat.base}
Given \form{\sstack}, we have
\begin{small} \form{\sstack \force \code{\xpure}(\form{\base})} \end{small}iff there exists  \begin{small}\form{\sheaps}\end{small}, such that 
\begin{small}\form{\sstack,\sheaps \force \base}\end{small}.
\end{lemma}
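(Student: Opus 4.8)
The plan is to prove the biconditional by establishing its two directions separately, using induction on the number of points-to atoms occurring in the spatial part $\heap$ of $\base$ as the organising principle; since $\xpure$ is generated clause-by-clause from the syntactic atoms of $\base$, peeling off a single points-to predicate lets the inductive hypothesis discharge the rest, and the heap-constructing direction composes naturally over $\heap_1\sep\heap_2$. The direction sending a concrete model to $\xpure$ is, in fact, a direct verification, whereas the converse requires synthesising an actual heap and is where the real work lies.

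For $(\exists \sheaps.~\sstack,\sheaps \force \base) \Rightarrow \sstack \force \xpure(\base)$, I would fix a model, namely an extension $\sstack'$ of $\sstack$ interpreting the existentials $\setvars{w}$ together with a heap $\sheaps$ satisfying $\heap \wedge \pure_{ptr} \wedge \pure_a$, and exhibit a witnessing valuation of the fresh integer variables by setting $M(\setvars{v}_i){=}0$ when $\sstack'(\setvars{v}_i){=}\nil$ and $M(\setvars{v}_i){=}i$ otherwise. Each clause is then checked against the semantics of Section~\ref{sec:assert}: a points-to or a $\dangl{\cdot}$ atom forces its address to be a genuine location, allocated resp.\ de-allocated, distinct from $\nil$, yielding clause (1); the interpretation of $\nil$ yields clause (2); clauses (3)--(4) hold because the two atoms involved occur at distinct positions $i\neq j$ and hence receive distinct indices under the valuation, while domain-disjointness of $\sep$ and the incompatibility of $\bot$ with real cell content guarantee that the assumed model is consistent with these inequalities; and clauses (5)--(6) are immediate since $\sstack'$ already validates the carried constraints $\pure_{ptr}$ and $\pure_a$.

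For the converse $\sstack \force \xpure(\base) \Rightarrow \exists \sheaps.~\sstack,\sheaps \force \base$, I would exploit that $\Locations$ is infinite and only finitely many variables occur in $\base$. From a satisfying assignment of the $M$-variables I fix an injection $L$ from the finitely many positive integer identities appearing as $M$-values to distinct locations, with $L(0){=}\nil$, and read off each pointer variable's address as $L(M(\setvars{v}_i))$. I then build $\sheaps$ by allocating, for every atom $\sepnodeF{x}{c}{\overline{f{:}v}}$ of $\heap$, a single cell at the location assigned to $x$ with content dictated by $\overline{f{:}v}$, and by witnessing every $\dangl{\cdot}$ address with $\bot$ content as the dangling semantics demands. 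Because no separated pair can collapse onto the same variable, clauses (3)--(4) would otherwise force $M(x){\neq}M(x)\equiv\false$, the chosen locations are pairwise distinct and the separating conjunction is well defined; clause (2) keeps null pointers unallocated, and clauses (5)--(6) secure the pure part, so $\sstack'',\sheaps \force \base$ with the induced stack.

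The \emph{hard part} will be precisely this backward construction: aligning the abstract integer (in)equalities with concrete, pairwise-distinct heap locations while staying faithful to the fixed stack $\sstack$. Two points are delicate. First, the address chosen for each allocated or dangling variable must be a true element of $\Locations$ and never $\nil$, which is exactly what the joint consistency of clauses (1) and (2) enforces. Second, aliasing must be handled: two syntactically distinct pointer variables constrained to be equal would get different indices from clause (1) yet be forced equal by clause (5), so that $\xpure$ could be satisfiable while $\base$ is not. I would discharge this by assuming $\base$ is first put in the normal form produced before $\xpure$, in which pointer equalities in $\pure_{ptr}$ are eliminated by substitution and equal variables merged; after this, clause (3) degenerates to $\false$ exactly on the self-aliased, unsatisfiable configurations, so $\xpure(\base)$ is $\false$ precisely when no witnessing heap exists, closing the equivalence.
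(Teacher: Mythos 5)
Your proposal is correct and runs in the same direction as the paper, but the paper's own ``proof'' of Lemma~\ref{lem.sat.base} is the single clause ``By induction, we show the following lemma,'' preceded only by the informal reading of clauses (1)--(6); most of what you write is therefore content the paper does not actually supply. Your two substantive additions are worth flagging. First, the explicit backward construction --- picking an injection from the satisfying $M$-values into the infinite set $\Locations$ (with $0$ reserved for $\nil$) and materialising one cell per points-to atom, with clauses (3)--(4) guaranteeing pairwise-distinct locations so that $\sep$ is well defined --- is precisely the step the paper leaves implicit. Second, and more importantly, you correctly observe that $\xpure$ as literally defined is not complete under aliasing: two separated points-to atoms over syntactically distinct variables that the given stack identifies make \form{\base} unsatisfiable while \form{\xpure(\base)} stays satisfiable, because the fresh variables $M(\setvars{v}_i)$ are decoupled from the stack values $\sstack(\setvars{v}_i)$. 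Your repair --- require \form{\base} to be in the normal form where pointer equalities have been substituted away, and read the equivalence as satisfiability-preserving, i.e.\ allow the pointer components of the stack to be induced from the $M$-valuation rather than fixed in advance (your ``induced stack'' $\sstack''$) --- is exactly the reading under which the lemma holds and under which {\uosl} uses it; the paper states the lemma with the same fixed \form{\sstack} on both sides and never addresses this. So there is no gap on your side; if anything, you have located and patched an imprecision in the paper's own statement.
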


As \code{\xpure}(\form{\base}) is in the Presburger arithmetic,
 satisfiability
of \code{\xpure}(\form{\base}) is decidable. Furthermore,
\code{\xpure}(\form{\base}) can be discharged efficiently
by an SMT solver. 
%


\hide{\paragraph{Satisfiability for Inductive Formulas}
For efficiency, \code{\uosl} additionally discharges the unsatisfiability of
a formula with inductive predicates
through two steps as follows.
First, it places all occurences of inductive predicates by
their corresponding invariants to obtain a base formula. After that,
it invokes the procedure \code{\xpure} above to check the
satisfiability. If the outcome is unsatisfiable, it marks the leaf as closed.}

\paragraph{Cyclic Proofs}\label{sat.ind}
Function {\lbsl} constructs back-links as follows.
\hide{This function relates to the following projections.

We first define numeric  and
 spatial
  projection functions.
 These projections are critical for computing
back-links of our decision procedure.
We write $\setvars{x}^N$ and $\setvars{x}^S$
 to denote
the sequence of integer variables
 and pointer variables
 in $\setvars{x}$, respectively.
 For each \form{c \in {\Dns}}, we assume
 node symbol \code{c^S} for its spatial projection.
 \code{c^S} is a sequence of fields similarly to \form{c}
 except that non-pointer fields are replaced by the \form{\_} fields.

\begin{defn}[Spatial Projection]\rm
\label{defn.spatial.proj}
The spatial projection
$(\constr)^S$ is defined inductively as follows.
\saveone\[\saveone
\begin{array}{ll}
\form{(\D_1 \vee \D_2)^S}  \equiv
 \form{(\D_1)^S} \vee \form{(\D_2)^S} &
\form{(\exists \setvars{x} \cdot \D)^S} \equiv
 \form{\exists \setvars{x}^S \cdot (\D)^S} \\
\form{(\heap{\wedge}\atom{\wedge}\phi)^S}  \equiv
\form{(\heap)^S{\wedge}\atom} &
\form{(\heap_1 {\sep} \heap_2)^S}  \equiv
\form{(\heap_1)^S} {\sep} \form{(\heap_2)^S} \\
\form{(\seppredF{\code{P}}{\setvars{v}})^S}  \equiv
\seppredF{\code{P^S}}{\setvars{v}^S} &
\form{(\sepnodeF{x}{c}{\setvars{v}})^S}   \equiv
 \form{\sepnodeF{x}{c^S}{\setvars{v}^S}}\\
\form{(\emp)^S}  \equiv  \emp \\
\end{array}
\]
\end{defn}
For each inductive predicate
\form{
\defpred~ \seppred{\code{P}}{\setvars{t}}  ~{\equiv}~
   \bigvee^n_{i=1} ({\exists} \setvars{w_i}{\cdot}~ \D_i:l_i);
},
 we assume the inductive predicate symbols
\code{P^S} and predicate \seppred{\code{P^S}}{\setvars{t}^S}
for its spatial projection satisfying
\savespace\[\savespace
 \form{
\defpred~ \seppred{\code{P}^S}{\setvars{t}^S}  ~{\equiv}~
   \bigvee^n_{i=1} ({\exists} \setvars{w_i}{\cdot}~ \D_i)^S:l_i;
}
\]

\begin{defn}[Numeric Projection]
\label{defn.numeric.proj}
The numeric projection
$(\constr)^N$ is defined inductively as follows.
\saveone\[\saveone
\begin{array}{ll}
\form{(\D_1 \vee \D_2)^N}  \equiv \form{(\D_1)^N} \vee \form{(\D_2)^N} &
\form{(\exists \setvars{x} \cdot \D)^N} \equiv
 \form{\exists \setvars{x}^N \cdot (\D)^N} \\
\form{(\heap{\wedge}\atom{\wedge}\phi)^N}  \equiv
\form{(\heap)^N{\wedge}\phi} &
\form{(\heap_1 {\sep} \heap_2)^N}  \equiv
\form{(\heap_1)^N} {\wedge} \form{(\heap_2)^N} \\
 \form{(\seppredF{\code{P}}{\setvars{v}})^N}  \equiv
 \seppredF{\code{P^N}}{\setvars{v}^N} &
\form{(\sepnodeF{x}{c}{\setvars{v}})^N}  {\equiv}
\form{(\emp)^N}  \equiv  \true \\
\end{array}
\]
\end{defn}
For each inductive predicate
\form{
\defpred~ \seppred{\code{P}}{\setvars{t}}  ~{\equiv}~
   \bigvee^n_{i=1} ({\exists} \setvars{w_i}{\cdot}~ \D_i:l_i);
},
 we assume the inductive predicate symbols
\code{P^N} and predicate \seppred{\code{P^N}}{\setvars{t}^N}
for its numeric projection satisfying
\savespace\[\savespace
 \form{
\defpred~ \seppred{\code{P}^N}{\setvars{t}^N}  ~{\equiv}~
   \bigvee^n_{i=1} ({\exists} \setvars{w_i}{\cdot}~ \D_i)^N:l_i;
}\]

\hide{For examples, the numeric projections of predicates \code{P}, \code{Q}
are as follows.

\[
\begin{array}{l}
\code{pred}~\form{\seppred{\code{P}^N}{n} \equiv
  {n{=}0: 0}}
~~{\vee}~ \form{~{\exists}n_1 {\cdot} \seppred{\code{Q}^N}{n_1}{\wedge} n{=}n_1:1}; \\
\code{pred}~\form{\seppred{\code{Q}^N}{n} \equiv
{\exists} ~n_1 {\cdot \seppred{\code{P}^N}{n_1}{\wedge}{n{=}n_1{+}1:0}}}
 \\
~~~ {\vee}~ \form{~{\exists} ~n_1 {\cdot}
\seppred{\code{Q}^N}{n_1}{\wedge}n{=}n_1{+}1:1} ;\\
\end{array}
\]}


\paragraph{Back-Link}}
For every {\em open} leaf node \form{\D^{bud}},
{\lbsl} checks whether there exists an interior node \form{\D^{comp}} such that
after some weakening,
\form{\D^{bud}} is subsumed by
\form{\D^{comp}}
and all base formulas in the subtree rooted by \form{\D^{comp}}
 are unsatisfiable.
Particularly, 
for global infinitary soundness \cite{Brotherston:APLAS:12,Loc:CAV:2016}, {\lbsl} only considers
those \form{\D^{bud}} and \form{\D^{comp}} of the following form.
\saveone\[\saveone\begin{small}
\begin{array}{rclrcl}
 \form{\D^{comp} & {\equiv} & \exists \setvars{w}_1{\cdot}\heap_{b_1} {\sep}
\seppredF{\code{P_1}}{\setvars{t}_1}_m^0{\sep} ...{\sep}\seppredF{\code{P_j}}{\setvars{t_i}}_m^i {\wedge} \pure_1} &\quad
\form{\D^{bud} & {\equiv} & \exists \setvars{w}_2{\cdot}\heap_{b_2} {\sep}
\seppredF{\code{P_1}}{\setvars{t}_1'}_n^0{\sep} ...{\sep}\seppredF{\code{P_j}}{\setvars{t_k'}}_n^k {\wedge} \pure_2}
\end{array}
\end{small}
\]
where \code{k{\geq}i}, \code{n{>}m}, \form{\heap_{b_1}} and \form{\heap_{b_2}} are base formulas. The constraint \code{n{>}m} implies
at least one inductive predicate is unfolded.
In particular, function \form{ {\lbsl}(\D^{bud}{,}\D^{comp})} is implemented as follows.
it finds a substitution \form{\sub}
such that:
 (a) for every \form{\sepnodeF{x_j}{c_j}{\setvars{v}_j} {\in} \heap_{b_1}}
there exists \form{\sepnodeF{x'_j}{c_j}{\setvars{t}_j} {\in} \heap_{b_2}}
and \form{\sepnodeF{x_j}{c_j}{\setvars{v}_j}\equiv \sepnodeF{x'_j}{c_j}{\setvars{t}_j}\sub};
(b) similarly,
every occurrence of inductive predicates in the companion is linked;
(c) \form{\pure_2\sub \imply \pure_1} holds.
\hide{free points-to predicates are exhaustively matched.
For every \form{\sepnodeF{x_j}{c_j}{\anon} {\in} \heap_{b_1}} and
\form{x_j {\notin} \setvars{w}_1},
there exists \form{\sepnodeF{x'_j}{c_j}{\anon} {\in} \heap_{b_2}}
s.t. \form{\pure_{2} \imply x_j'{=}x_j};
and vice versa, for every \form{\sepnodeF{x_j}{c_j}{\anon} \in \heap_{b_2}} and
\form{x_j {\notin} \setvars{w}_2},
there exists \form{\sepnodeF{x'_j}{c_j}{\anon} \in \heap_{b_1}}
s.t. \form{\pure_{1} \imply x_j'{=}x_j}.
\item Thirdly, occurrences of inductive predicates are exhaustively matched.
First, we replace every existentially quantified argument
of inductive predicates in \form{\heap_{b_1}} and \form{\heap_{b_2}}
by \form{\anon}.
After that for every \form{\seppredF{\code{P_j}}{\setvars{t}}_m^\anon \in \heap_{b_1}}, there exists \form{\seppredF{\code{P_j}}{\setvars{v}}_n^\anon \in \heap_{b_2}} s.t. for all \form{i \in \{1..n\}}
(where \form{n} is the arity of \code{P_j})
either \form{\setvars{t}_i}
is identical to \form{\setvars{v}_i} or both of them are
\form{\anon} and vice versa.
Note that this matching considers equality constraints in \form{\pure_1} and \form{\pure_2}.
\item Fourthly, (dis)equalities over pointers are matched.
\item Lastly, we check whether the pure constraint of
the bud implies the pure constraint of the companion.
Back-link is decided if this implication holds.}

\section{Implementation and Evaluation}\label{sec.impl}

{\toolname}  is programmed in OCaml. 
It uses
 Z3~\cite{TACAS08:Moura} for checking satisfiability of first-order logic formulae
produced by procedure \code{\xpure} for the base cases.
Moreover, the analyzer \btt{LFP} uses FixCalc~\cite{Popeea:ASIAN06}
 to compute an over-approximating invariant of an inductive predicate
to prune unsatisfiable nodes in advance via function \code{\oosl}.

In the presence of an erroneous trace, {\toolname} produces a counterexample
annotated with a stack record of a program path \form{\xi} leading to the error.
From the counterexample, we  extract the error location
and generate error witnesses \cite{Beyer:FSE:2015} in GraphXML \cite{DBLP:conf/gd/BrandesEHHM00}.
%
\begin{table*}[t]
\setlength{\tabcolsep}{1.4pt}
\begin{small}
\begin{center}
\caption{Experimental Results}
\savespace \savespace
\label{tbl:expr:rec:cmb}
\begin{tabular}[t]{|c | %
|c | c |c | c | c | c  | %
|c | c |c | c | c | c  |
| c | c| }
\hline
\multirow{2}{3em}{Tool} & \multicolumn{6}{|c|}{RQ1: Verification} & \multicolumn{6}{|c|}{RQ2: Falsification} &  \multicolumn{2}{|c|}{RQ3: Overall}\\
\cline{2-15}
  &  s$\surd$ &  s\ding{55} & unk & to &  pts & time
  & e$\surd$ & e\ding{55} &  unk & to &  pts & time %
  & pts & time
 \\
\hline
\rowcolor{Gray}{ SeaHorn \cite{Gurfinkel:CAV:2015}} & 59 & \textcolor{red}{11} & 0 & 4 & {-}58 & 14m37s
                                     &  57 & 0 & 0 & 1 & 57 & 4m8s
& {-}1 & 18m45s \\
{ Cascade \cite{Wei:VMCAI:2014}} &  0 & 0 & 58 & 16 &  0 & 48m45s
                                 & 8 & 0  & 50  & 0  & 8 & 13m5s
                                 & 8 & 61m50s \\
\rowcolor{Gray}{ PredatorHP \cite{Predator:SV:16}} & 39 & \textcolor{red}{1}  & 16 & 18 & 62 & 64m36s
                                   & 38  & 0 &  5 &  15 & 38 & 28m39s
 & 100 & 95m15s \\
{ ESBMC \cite{Cordeiro:2011:ICSE}} & 39 & \textcolor{red}{1} & 0 & 34 
                                                                        & 62 & 112m40s
                                                  & 55 & 0 & 0 & 3 
                                                                   & 55 & 11m33s
& 117 & 124m14s\\
\rowcolor{Gray}{ CBMC \cite{Clarke:TACAS:2004}} & 32  & 0 & 1 & 41 & 64 & 130m20s
                                 & 54 & \textcolor{black}{0} & 0 & 4  & 54  & 17m56s
                                 &  118 & 148m16s \\
{ Smack-Corral \cite{Haran:TACAS:2015}} & 44 & \textcolor{red}{1} 
                                                                   & 0  & 29  & 72 & 93m48s
                                                       & 50  & 0  & 0 & 8 & 50 & 31m48s
                               &  122 & 125m36s   \\
\rowcolor{Gray} { CPA-Seq1.9 \cite{Beyer:CAV:2007}} & 38  &  0 & 13 &  23 & 76 &  434m47s
                                 &  42 & 0 & 3 & 13 & 42 & 55m6s
                                 & 128 & 489m53s \\
{ UAutomizer \cite{Heizmann:CAV:13}} & 49 & 0 & 4 & 21 & 98 & 83m20s
                                     &  47 & 0  & 0 &  11 & 47 & 44m20s
& 145 & 127m40s \\
\rowcolor{Gray} {CPA-Seq1.4 \cite{Beyer:CAV:2007}} & 54  &  0 & 4 &  16 & 108 &  51m41s
                                 &  47 & 0 & 1 & 10 & 47 & 25m36s
                                 & 155 & 77m17s\\
{\bf \toolname} & {\bf 59} & {\bf 0} & {\bf 4} &  {\bf 11}& {\bf 118} & {\bf 37m31s}
                & {\bf 55} & {\bf 0} & {\bf 1} & {\bf 2} & {\bf 55 }  &  {\bf 7m31s }                & {\bf 173} & {\bf 45m2s} \\
\hline
\end{tabular}
\end{center}
\end{small}
\end{table*}

In the rest of this section, we conduct experiments to
answer  three research questions (RQ):
\begin{itemize}[leftmargin=*]\setlength{\itemsep}{0pt}
\item {\em RQ1}: Does {\toolname} verify programs correctly?
\item {\em RQ2}: Does {\toolname} falsify program properties correctly?
\item {\em RQ3}: Is {\toolname} more effective and efficient than existing tools?
\end{itemize}

\paragraph{Experimental setup}
We conduct the experiments on a set of challenging programs. 
Our test subjects include 132 C programs
taken from the SV-COMP benchmark repository~\cite{Dirk:SVCOMP:2016}, i.e.,
all 98 programs in the {\em Recursive} category, 24 programs  manipulating nested lists and binary trees 
in the
 {\em Heap Data Structures} category,
 and 10 
 generated by us.
	Among the 132 programs, 74 of them are safe whereas 58 are erroneous. 
The programs in the first category implement 
recursive algorithms like the towers of Hanoi, Fibonacci's numbers and McCarthy91.
The programs in the last category
 manipulate complex data structures (e.g., parallel lists, reversely doubly-linked list or tll trees),
 which
are yet to be included in~\cite{Dirk:SVCOMP:2016} due to their complexity.
%
Each program contains at least one {\em loop} and/or one {\em recursive} method call,
and has at least one pure assertion.
We set the upper bound of the unfolding number of all predicates in
{\toolname} to 28, i.e., after 28 unfoldings, if
the error/safe decision has not been made, {\algonameesl} returns \code{UNKNOWN}.

We compare our verification system {\toolname}
 with the state-of-the-art verification tools
participated in the SV-COMP competition including PredatorHP \cite{Predator:SV:16},
 SeaHorn \cite{Gurfinkel:CAV:2015}, Cascade \cite{Wei:VMCAI:2014},
ESBMC \cite{Cordeiro:2011:ICSE}, UAutomizer \cite{Heizmann:CAV:13},
CPA-Seq \cite{Beyer:CAV:2007}, CBMC \cite{Clarke:TACAS:2004} and
Smack-Corral \cite{Haran:TACAS:2015}.
Among the tools, SeaHorn is also a CHC-based verification system.
PredatorHP runs several instances of
 Predator \cite{Dudka:CAV:2011}
 in parallel
 and composes the results into a final verification verdict.
Cascade and 
ESBMC 
are based on Bounded Model Checking (BMC)
and were designed for falsification~\cite{Wei:VMCAI:2014,Herbert:SEFM:2015,Morse:TACAS:2013}.
Ultimate Automizer \cite{Heizmann:CAV:13} is a model checker based on automata.
CPA-Seq \cite{Beyer:CAV:2007} builds and refines reachability trees
combining multiple techniques like abstract interpretation, predicate abstraction, and
shape analysis.
Smack-Corral \cite{Haran:TACAS:2015}
translates LLVM compiler's intermediate representation into
SMT constraints.
For
CBMC, CPA-Seq1.9, ESBMC, PredatorHP
and Ultimate Automizer,
we apply the binaries and settings used in Sv-comp 2020 \cite{SVCOMP:2020}.
For the remaining tools, 
we apply the binaries and settings used in Sv-comp 2016~\cite{Dirk:SVCOMP:2016}.
All experiments were performed on a machine with an 
 Intel i7-6500U (2.5GHz)
2-processor and 8 GB RAM. 
The timeout is set to 180 seconds. 

\paragraph{Experimental Results}
The results are reported in Table \ref{tbl:expr:rec:cmb}.
The 1st column shows the verification systems.
The rest  shows results in relation to the three research questions.

\paragraph{RQ1}
To answer this question, we experiment on 74 safe programs.
The first two columns in this part show the number of correct safe (s$\surd$) and
incorrect safe (a.k.a. false positives) (s\ding{55}).
The next two columns show the number of unknown (unk column)
 and either timeout or crashes (to column).
We rank these tools based on their points (pts column).
 Following the same rules applied in the SV-COMP competition,
we gave +2 for one s$\surd$, -16 for one s\ding{55} and
 0 for either unknown or timeout or crash.
The last column shows the total wall time.
The results show that
SeaHorn and {\toolname} produced
the highest number of correct safe.
However, SeaHorn returned 11 incorrect answers, 9 of which are programs in the second category.
This incompleteness is due to the lack of a precise encoding
model for complex data structures.
PredatorHP, ESBMC and Smack-Corral reported one wrong outcome each.
Cascade did not return any incorrect results but it also did not produce any correct ones.
{\toolname} could not decide 15 programs, it returned
either \code{UNKNOWN} or
timeout, 12 of which are
in the first category and 3 are in the second one.
These programs contain 
mutual recursions, constraints outside the linear arithmetic,
or non-periodic recurrent relations.

\paragraph{RQ2}
To answer this question, we have experimented on 58 erroneous programs.
Columns in this part are similar to that of the safe programs, except that the first two columns show the number of
correct error (e$\surd$) and false negatives (e\ding{55}). Following
the same rules applied in the SV-COMP competition, we gave +1 for one e$\surd$ and -32 for one e\ding{55}.
The results show that CHC-based systems performed the best while SeaHorn produced the highest number of correct safe with the shortest running time. Nevertheless,
none returned a false negative. 
{\toolname} returned 3 timeouts
for programs requiring a very large number of
unfoldings. For instance, one of them is to compute
the Fibonacci number where the input is 25
and requires to invoke 242785 recursive calls.
This number of calls is beyond both our algorithm bound and running bound.

\paragraph{RQ3}
To answer this question, we sum the points and running time of tools
in the previous two parts and report in the respective
 two columns in the third part of Table \ref{tbl:expr:rec:cmb}.
The results show that {\toolname} is the most effective and ranks second in terms of efficiency.
Cascade, UAutomizer, CPA-Seq
and {\toolname} did not report any incorrect outcomes.
SeaHorn is the fastest in all two categories; however, it was the least in effectiveness.
PredatorHP is the third fastest
but reported a high number of false positives.
These experimental
 results imply that our proposal,
a combination of abstract interpretation, model checking and cyclic proof for CHC,
  is promising as
{\toolname} is effective and efficient.

\section{Related Work and Conclusion} \label{sec.related}

Closest to our work are CHC-based verification systems, e.g. Duality \cite{export:180055},
 HSF \cite{Grebenshchikov:PLDI:2012,Sergey:TACAS:2012}, SeaHorn \cite{Gurfinkel:CAV:2015} and
$\mu$Z~\cite{Hoder:CAV:2011}. 
Details on these systems based on CHC
were summarized in~\cite{Nikolaj:SMT:2012,DBLP:conf/birthday/BjornerGMR15}.
As mentioned above, these systems focus on numerical and array programs whereas we target heap-manipulating programs.
Our work also relates to cyclic proof verification systems 
\cite{Brotherston:POPL:08,Brotherston:CADE:17,Brotherston:CPP:17}.
While they support only safety verification (i.e. the absence of bugs),
ours supports reasoning about both safety and the presence of bugs.

Verification systems support for both verifying and falsifying
safety properties 
include
property-directed reachability (PDR) \cite{Itzhaky:CAV:2014,Aleksandr:CAV:2015}, interpolation \cite{Aws:ESOP:2015},
and separation logic \cite{Berdine:APLAS05,jacm.Calcagno11,Piskac:CAV:2013,Navarro:APLAS:2013,Jansen:ESOP:2017}.
PDR is a sound and relatively complete framework for shape analysis
using abstraction predicates. 
%
%
Smallfoot \cite{Berdine:APLAS05,jacm.Calcagno11} 
presents foundations in separation logic.
GRASShopper \cite{Piskac:CAV:2013}  and Asterix \cite{Navarro:APLAS:2013}
encode the verification conditions of heap-manipulating programs
 into an SMT supported form.
Our work shows, at the first time, how to employ cyclic proofs for
verifying and falsifying
the safety properties .
\hide{Moreover, VCs generated by the existing verification systems
based on separation logic rely on entailment checking
and their completeness is limited to simply linked data structures.
In contrast, VCs generated by our system are property-based queries;
the expressiveness of the decidable fragment which
includes complex data structures is promising to advance
the completeness of the verification system.}

Safety verification systems based on abstract interpretation
 have been developed for verification (not falsification), 
e.g.~\cite{Chin:SCP:12,CAV08:Yang,Dudka:CAV:2011,Holik:CAV:2013,Loc:CAV:2014,Loc:2018:S2ENT}.
These systems over-approximate input programs
to verify the absence of errors
and may produce false positives.
%
There are related techniques supporting loop-based programs, such as  PDR 
and interpolation (mentioned earlier), loop invariant generation
\cite{Corina:SPIN:2004,Furia:ACMS:2014,Li:ASE:2017},
loop precondition inference using cyclic proofs \cite{Brotherston-Gorogiannis:14},
k-induction based techniques \cite{Martin:SAS:2015,Dirk:CAV:2015}. 
In contrast to these, we transform
the source-code-based loops/recursions into logic-based inductive predicates and
rely on a decision procedure to search
for cyclic proofs to prune the subsumed execution paths 
for both safety verification and falsification.

Compare to the symbolic execution for separation logic
\cite{Berdine:APLAS05,Chin:SCP:12,Loc:NFM:2013,Muller:CAV:2016,Pham:ICSE:2018,Long:ATVA:2019,Long:FM:2019},
our system distinguishes itself
at the free command.
Without the dangling predicate, the operation semantics
of this command in these systems is: \form{\{\sepnodeF{x}{c}{\overline{v}}\}} \code{free~x} {\{\emp\}}.
That means, these systems over-approximate the post-state
 and
thus they may produce false positives.
We notice the incorrectness separation logic presented in
 \cite{ISL:2020} recently
which also makes use of negative heaps to reasoning about the presence of bugs.
Similar to ours, the operation semantics
of this command in this work is: \form{\{\sepnodeF{x}{c}{\overline{v}}\}} \code{free~x} {\{\dangl{x}\}}.
The main difference is that as
the proposal in \cite{ISL:2020} is to fix the over-approximation
of the frame rule, \hide{of separation logic,} their negative heap predicate
 is spatial based. As we use the dangling predicate to assert the guard
condition of the heaps, we treat them as pure constraints.
Moreover, while the engines in \cite{Berdine:APLAS05,Chin:SCP:12,Muller:CAV:2016}
reduces the compositional verification problem into the validity of  entailments,
the engine in \cite{Pham:ICSE:2018,Long:ATVA:2019,Long:FM:2019} generates test inputs from pre-conditions
to witness the reachability
in {\em bounded} programs,
 our engine is to transform
the verification problem into the reachability problem using CHC.

\hide{
This proposal also relates to those techniques supporting loop-based programs.
Besides PDR \cite{Itzhaky:CAV:2014,Aleksandr:CAV:2015}
and interpolation \cite{Aws:ESOP:2015},
while the proposals in
\cite{Cousot:POPL78,Henzinger:POPL:2004,Corina:SPIN:2004,Furia:ACMS:2014,Li:ASE:2017}
proposed to generate loop invariant, the work in \cite{Brotherston-Gorogiannis:14}
infers preconditions of loops to establish cyclic proofs,
the technique in \cite{Martin:SAS:2015,Dirk:CAV:2015} presented
k-induction to search for a strong enough inductive invariant 
to establish safety properties.
In contrast to these techniques, we transform
the source-code-based loops/recursions into logic-based inductive predicates and
rely on a decision procedure to search
for cyclic proofs to prune the subsumed execution paths 
for both safety verification and falsification.
}

Our solver ({\algonameesl}) 
 relates to 
decision procedures for satisfiability problems
in separation logic 
\cite {Berdine:APLAS05,NavarroPerez:PLDI:2011,Enea:APLAS:2014,Brotherston:LISC:14,Piskac:CAV:2013,Navarro:APLAS:2013,Gu:IJCAR:2016,Xu:CADE:2017,Loc:CAV:2016,Makoto:APLAS:2016,Le:CAV:2017,Le:APLAS:2018,Loc:VMCAI:2021}.
It is closest to  cyclic-based
solvers \cite{Loc:CAV:2016,Le:CAV:2017,Le:APLAS:2018,Loc:VMCAI:2021}.
The procedure in
\cite{Le:APLAS:2018} derived cyclic proofs for word equation problem and \cite{Loc:VMCAI:2021}
proposed to infer summary of inductive predicates for compositional satisfiability solving.
Compared with \code{S2SAT_{SL}} \cite{Loc:CAV:2016}, {\algonameesl} 
additionally supports 
dangling pointers 
and   memory  load/store/del predicates.
The decision procedure  in \cite{Le:CAV:2017} reduces
the satisfiability problem of an inductive predicate
into the satisfiability problem of its numeric part, which
requires a 
separation between the spatial and numeric parts.
In contrast to \cite{Le:CAV:2017} and similarly to \cite{Loc:CAV:2016}, our procedure
 creates back-links based on 
the weakening and subsumption.
Despite of their power, the decision procedures in \cite{Le:CAV:2017}
 can not be applied into this work
as the flow-sensitiveness forbids the separation of the two domains.

\paragraph*{Conclusion}
We have presented {\toolname}, a
CHC-based verification system using a novel extension of
separation logic.
Given a C program, our system transforms it into
CHC and invokes 
decision procedure {\algonameesl}
to symbolically execute and to discharge these CHC. {\algonameesl}
detects bugs with feasible paths,
or constructs a cyclic proof to prune subsumed safe paths
 (so as to prove bug-free programs).
We have implemented
a prototype and evaluated it on a set of heap-manipulating programs.
 The experimental results
shows that CHC-based approach is a promising approach
to software verification for both  the absence and the presence of bugs.


\bibliographystyle{abbrv}
\bibliography{all}


 \begin{subappendices}
   \renewcommand{\thesection}{\Alph{section}}%
   \section{Encoding CHC (Cont)}
   \[
\hlr{ASSUME}{\rho{=} \circ \{v'/v ~|~ v\in \FV(\pure) \}}
{\htriple{\D{:}l}{~\code{assume}~\pure}{(\D\rho{\wedge}\pure ){:}l}}
\quad
\hlr{SEQ}{
\htriple{\D{:}l}{e_1}{\D_1{:}l_1} \quad
\htriple{\D_1{:}l_1}{e_2}{\D_2{:}l_2}
}
    {\htriple{\D{:}l}{e_1;e_2}{\D_2{:}l_2}}
\]
\[
   \hlr{IF-TRUE}{
\begin{array}{c}
\htriple{\D{\wedge}v{:}(l{::}1)}{e_1}{\D_1{:}l_1}
\end{array}}
{\htriple{\D{:}l}{\myif{v}{e_1}{e_2}}{\D_1{:}l_1}}
 \qquad
\hlr{IF-FALSE}{
\begin{array}{c}
\htriple{\D{\wedge}{\neg}v{:}(l{::}2)}{e_2}{\D_2{:}l_2}
\end{array}}
{\htriple{\D{:}l}{\myif{v}{e_1}{e_2}}{\D_2{:}l_2}}
\]

The encoding steps for \code{assume}, \code{sequencing}, and \code{if}
commands are shown above.

   \section{Soundness and Termination}
\label{sat.term}

Given a program defined in our core language shown in Figure~\ref{fig.syntax}, the program has an error
(either \code{assume (\error{=}1)} is reachable or a heap-manipulation
is violated) if and only if there is a satisfiable error formula; and the program has no error if and only if there is a cyclic proof for the generated CHC.

\begin{theorem}
Given a core program P,
 (a) if
{\toolname} returns a bug, then there is some input
to
P
that leads to an error;
 (b) if {\toolname} terminates without a bug
 then there is no input that leads to a null pointer error or double free error;
 (c) otherwise,
{\toolname}  will run forever.
\end{theorem}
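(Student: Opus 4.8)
The plan is to factor the argument into two independent correctness results and then compose them: (i) \emph{encoding adequacy}, relating the operational semantics of the core language of Fig.~\ref{fig.syntax} (the transition relation \form{\hookrightarrow}) to the generated CHC, and (ii) \emph{decision-procedure correctness}, relating the open/closed status of the cyclic execution tree built by \algonameesl{} to the satisfiability of the root query \form{\D_0}. For (i) I would proceed by structural induction on commands \form{e}, showing that each rule of Fig.~\ref{fig.ntyperules} is both sound and complete with respect to \form{\hookrightarrow}: whenever \htriple{\D_1}{e}{\D_2} and \form{\sstack,\sheaps \force \D_1} for a state \form{(\sstack,\sheaps)}, every concrete step of \form{e} is captured by some disjunct of \form{\D_2}, and conversely every disjunct is realisable by a concrete step. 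The crucial cases are the memory-access commands: here I must verify that normalising the \form{\code{LD}}/\form{\code{ST}}/\form{\code{DEL}} predicates in increasing order of the sequence number \form{k} reconstructs exactly the flow-sensitive heap updates, and that the three pairwise-disjoint post-states \form{\D_1{\vee}\D_2{\vee}\D_3} of Sec.~\ref{mov.encoder} faithfully and exhaustively separate the dangling, null, and allocated cases. Summing over all paths yields the adequacy statement: a base formula \form{\base} with \form{\error{=}1} is derivable from \form{\D_0} and satisfiable \emph{iff} some input to \form{P} drives an execution from \code{main} to an error.

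Parts (a) and (b) then follow by combining adequacy with soundness of the back-end. For (a), when \algonameesl{} returns \sat{} it has exhibited an open leaf that \uosl{} certified satisfiable; by Lemma~\ref{lem.sat.base} the reduction \xpure{} is sound \emph{and} complete on base formulas, so this leaf is a genuinely satisfiable error formula, and by adequacy the extracted trace \form{\xi} is a real erroneous input. For (b), termination without a bug means the tree is \emph{closed}: every leaf is unsatisfiable (soundly detected by \oosl{} via over-approximating invariants, which can only prune truly infeasible states), or a satisfiable non-error base formula, or a bud linked back to a companion by \lbsl{}. I would then invoke the global infinitary soundness of the cyclic-proof framework of \cite{Loc:CAV:2016,Brotherston:APLAS:12}: since every back-link satisfies \form{n{>}m} (at least one inductive predicate is genuinely unfolded around the cycle) and the subsumption \form{\pure_2\sub{\imply}\pure_1} with matched spatial atoms makes the bud weaker than the companion, no infinite unfolding path can produce a satisfiable error leaf that is not already witnessed in the finitely explored companion subtree. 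As that subtree contains no error leaf, no satisfiable error formula is derivable from \form{\D_0}, and adequacy delivers the absence of null-dereference and double-free errors. Part (c) is immediate from the control structure of Algorithm~\ref{algo.s2.sat}: the \textbf{while} loop exits only through a \sat{} or \unsat{} \textbf{return}, so if neither (a) nor (b) ever fires the procedure unfolds indefinitely.

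The step I expect to be the main obstacle is the cyclic-soundness argument underlying (b), namely justifying the back-link rule as a sound induction principle in our setting rather than merely appealing to \cite{Loc:CAV:2016}. The delicacy is that buds and companions now live in the \emph{flow-sensitive} logic with \form{\code{LD}}/\form{\code{ST}}/\form{\code{DEL}} and dangling predicates, so I must re-establish that subsumption after weakening is \emph{error-preserving} and \emph{unfolding-decreasing}. Concretely, I would argue by infinite descent: a hypothetical minimal (fewest-unfoldings) satisfiable error leaf reachable through a bud \form{\D^{bud}} can be transported along \form{\sub} to a satisfiable error leaf reachable through the companion \form{\D^{comp}} with strictly smaller unfolding count \form{m{<}n}, contradicting minimality, whence no such leaf exists. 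Making this transport rigorous requires showing that the matching of spatial atoms and inductive occurrences performed by \lbsl{} (conditions (a)--(c)) is preserved under the subsequent normalisation of the memory-access predicates --- precisely the point where cyclic reasoning and flow-sensitivity interact most intricately.
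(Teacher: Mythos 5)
Your decomposition matches the paper's own argument: the paper reduces the theorem to the soundness of \algonameesl{}, which it in turn derives from Lemma~\ref{lem.sat.base} (soundness \emph{and} completeness of \xpure{} on base formulas) together with the soundness of cyclic proofs inherited from \cite{Loc:CAV:2016,Brotherston:APLAS:12} --- exactly your parts (a) and (b) --- and part (c) is likewise read off the control flow of Algorithm~\ref{algo.s2.sat}. The only difference is one of depth, not of route: the paper states the encoding-adequacy claim (``the program has an error iff there is a satisfiable error formula'') without proof and cites the cyclic-proof framework wholesale, whereas you propose to actually discharge both the adequacy induction over Fig.~\ref{fig.ntyperules} and the infinite-descent justification of the back-link rule in the extended logic with \form{\code{LD}}/\form{\code{ST}}/\form{\code{DEL}} and dangling predicates; that elaboration is consistent with, and strictly more rigorous than, what the paper writes down.
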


The correctness of the above Theorem relies on 
{\algonameesl}.
In the following, we analyze the soundness of the solver.
Inspired by 
~\cite{Loc:CAV:2016}, 
{\algonameesl} is designed for deciding a sound and complete base theory {\basetheory} augmented with inductive predicates. Furthermore, the {\em base theory} {\basetheory} must satisfy the following properties: (i)
{\basetheory} is closed under propositional combination and supports
boolean variables; (ii) there exists a complete decision procedure
for {\basetheory}.
In this work, the theory {\basetheory} is the fragment of base formulas.
The soundness and completeness of {\basetheory} is shown by Lemma \ref{lem.sat.base}.
\begin{lemma}
{\algonameesl} is sound when it returns either {\sat} or {\unsat}.
\end{lemma}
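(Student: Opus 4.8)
The plan is to prove the two outcomes separately, exploiting that {\algonameesl} is an instantiation of the cyclic-proof framework of \cite{Loc:CAV:2016}. Since the query $\D_0$ encodes reachability of an error (a satisfiable base formula with $\error{=}1$), soundness amounts to: if {\algonameesl} returns {\sat} then $\D_0$ is satisfiable, and if it returns {\unsat} then $\D_0$ is unsatisfiable.

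For the {\sat} direction, I would show that satisfiability propagates \emph{upward} along the edges of the execution tree. Each edge unfolds one occurrence $\seppred{\code{P}}{\setvars{v}}$ by replacing it with a branch $\D_i$ of its definition $\seppred{\code{P}}{\setvars{v}}{\equiv}\bigvee\D_i$, and by the semantics of inductive predicates any model of a child $\heap{\sep}\D_i{\wedge}\pure$ is a model of the parent $\heap{\sep}\seppred{\code{P}}{\setvars{v}}{\wedge}\pure$. The subsequent normalization removing the \code{LD}/\code{ST}/\code{DEL} predicates is a chain of logical equivalences (the axioms and the load/store/free rewrites of Section~\ref{sec:solving}), hence also preserves satisfiability. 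When {\algonameesl} returns {\sat}, {\uosl} has reported a base leaf $\base$ satisfiable; by Lemma~\ref{lem.sat.base}, $\xpure(\base)$ is satisfiable iff $\base$ is, so $\base$ genuinely has a model $(\sstack,\sheaps)$. Pushing this model up through every ancestor edge yields a model of $\D_0$ satisfying $\error{=}1$, i.e. a genuine error path, which is exactly the counterexample $\xi$ extracted from the branch.

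For the {\unsat} direction I would use the infinite-descent argument underlying cyclic soundness. When {\unsat} is returned the tree is closed: every leaf is either an unsatisfiable base formula (marked closed by {\uosl}), an inductive formula whose over-approximating invariant is unsatisfiable (marked closed by {\oosl}), or a bud linked back by {\lbsl}. Assume for contradiction that $\D_0$ has a model $(\sstack,\sheaps)$. As inductive predicates are interpreted as least fixpoints and $\sheaps$ is finite, the model is witnessed by finitely many unfoldings, giving a well-founded measure: the least approximant index at which the model satisfies a node. I would then trace the model downward: at each node the disjunction of children is equisatisfiable with the node, so some child retains a model and the measure does not increase. The trace cannot terminate at a base or invariant-pruned leaf, since those are unsatisfiable — by Lemma~\ref{lem.sat.base} for {\uosl} and by soundness of the over-approximating invariants for {\oosl} — yet the traced model satisfies the leaf. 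Hence the trace is infinite and, the tree being finite, it must traverse back-links infinitely often. The link-back condition of {\lbsl} requires $\D^{comp}{\equiv}\D^{bud}\sub$ together with $n{>}m$, so at least one inductive predicate is unfolded on each cycle while the companion subsumes the bud; this forces the approximant measure to strictly decrease on each traversal of the cycle, contradicting well-foundedness. Therefore $\D_0$ is unsatisfiable.

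I expect the {\unsat} direction, and specifically the global soundness of the cyclic structure rather than the per-step reasoning, to be the main obstacle. The cleanest route is to discharge it by appeal to the general theorem of \cite{Loc:CAV:2016}: that framework is sound for any base theory $\basetheory$ that is closed under propositional combination with boolean variables and admits a complete decision procedure. The remaining work is then to certify that our instantiation meets these hypotheses — the fragment of base formulas is closed under the required boolean operations, and Lemma~\ref{lem.sat.base} supplies the complete decision procedure via the reduction to Presburger arithmetic — and to check that the two features new to this logic, the dangling predicate $\dangl{v}$ and the \code{LD}/\code{ST}/\code{DEL} memory predicates, are handled soundly, which is precisely what the normalization axioms and the semantics of Section~\ref{sec:assert} guarantee.
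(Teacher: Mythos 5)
Your proposal is correct and follows essentially the same route as the paper: the paper's proof is the one-line observation that the lemma ``follows Lemma~\ref{lem.sat.base} and the soundness of cyclic proofs,'' relying on the fact that {\algonameesl} instantiates the framework of \cite{Loc:CAV:2016} with the base theory of base formulas, whose sound and complete decision procedure is exactly Lemma~\ref{lem.sat.base}. Your write-up simply expands the two appeals the paper makes implicitly --- upward propagation of models for {\sat} and the infinite-descent/global-soundness argument for {\unsat} --- into explicit detail.
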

\noindent This lemma follows
Lemma \ref{lem.sat.base} and the soundness of cyclic proofs.

\paragraph{Termination}
As the number of procedures/loops in a program is finite,
our encoding is always terminating.
Therefore, {\toolname} terminates iff
{\algonameesl} terminates.
In the following, we show that
{\algonameesl} terminates in  the universal  fragment {\slap}
that is an extension of the decidable fragment
presented in \cite{Loc:CAV:2016}  with dangling predicates and the
heap-manipulation simulation predicates (\code{LD}, \code{ST}, and \code{DEL}).


{\slap} is a fragment of separation logic which supports all the syntax shown in Figure~\ref{prm.spec.fig}
 except that those definitions of inductive predicates are restricted as follows.
An inductive predicate \seppredF{\code{P}}{\setvars{t}} is in {\slap} if it is 
in the following form (\form{\setvars{v}_i {\subseteq} \setvars{w}} for all \form{i {\in} \{1...l\}}):
\[
\begin{array}[t]{l}
\form{\defpred~\seppredF{\code{P}}{\setvars{t}} {\equiv} \base_0 
\vee {\exists} \setvars{w}.~\Conv^n_{i{=}1}
\sepnodeF{x_i}{c_i}{\setvars{d_i}} {\sep} \Conv^n_{i{=}l}
 \seppredF{\code{P}}{\setvars{v}_i} {\wedge} \pure_r}
 \end{array}
 \]
where
\form{\base_0} is a  base formula, 
and \form{\pure_r} is a conjunction of ordering arithmetical constraints each of which
is of the form: either an arithmetical constraint over data fields \form{\setvars{d_i}} or \form{v_1{\leq}v_2{+}k}
or \form{v_1{\geq}v_2{+}k}, \form{k} is an integer number, \form{v_1{\in} \setvars{t} {\cup} {\bigcup^n_{i{=}1}}\{\setvars{d_i} \}} and \form{v_1\in  \bigcup^l_{i{=}1}\{\setvars{v_i} \}}.

\begin{theorem}\label{thm.term}
{\algonameesl} terminates for {\slap}.
\end{theorem}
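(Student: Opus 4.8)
The plan is to reduce termination on \slap to the termination result already established for the base decidable fragment in \cite{Loc:CAV:2016}, arguing that the two ingredients by which \slap extends that fragment---the dangling predicate and the memory-access simulation predicates \code{LD}/\code{ST}/\code{DEL}---do not enlarge the space of node-shapes that the cyclic-proof search must explore. The only source of nontermination in Algorithm~\ref{algo.s2.sat} is the \code{while} loop, which grows the execution tree through \unfoldsl; the loop exits with \sat the moment \uosl{} finds a satisfiable base leaf, and with \unsat{} once \oosl{} and \lbsl{} have closed every leaf (as unsatisfiable or linked back). Since each \slap predicate has finitely many branches, the tree is finitely branching, so by K\"onig's lemma nontermination would require an infinite path consisting of open leaves that are never linked back. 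I would derive a contradiction by showing such a path is impossible.

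First I would characterize the nodes along any path. The \slap restriction to a single self-recursive predicate \form{\code{P}}, whose recursive arguments \form{\setvars{v}_i} are drawn from the existentials \form{\setvars{w}}, bounds the multiset of predicate symbols carried by each node; together with the fact that the parameters \form{\setvars{t}} are fixed and newly introduced variables are either matched or quantified, the number of free pointer variables and points-to atoms in each \form{\base}-part stays bounded. The ordering-constraint form imposed on \form{\pure_r} (each conjunct being \form{v_1{\leq}v_2{+}k}, \form{v_1{\geq}v_2{+}k}, or a constraint over data fields) is precisely what guarantees that, modulo the entailment \form{\pure_2\sub{\imply}\pure_1} tested in clause (c) of \lbsl, the numeric projection of a node ranges over only finitely many equivalence classes. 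This finiteness is exactly what forces the companion/bud form (same predicate sequence, \form{n{>}m}) of \lbsl{} to match eventually; it is inherited unchanged from the base-fragment argument of \cite{Loc:CAV:2016}.

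The heart of the extension is showing that the new features preserve this finite state space. The simulation predicates \code{LD}/\code{ST}/\code{DEL} are removed by the normalization step \emph{before} any leaf is offered to \lbsl: each is rewritten into equalities and substitutions against the enclosing points-to predicate, so no node carrying an unresolved simulation predicate ever participates in subsumption matching, and they contribute nothing to the explored state space. The dangling predicate \form{\dangl{v}} is a pure pointer constraint over the (bounded) set of pointer variables of a node, hence yields only finitely many distinct \form{\pure_{ptr}} assertions and is handled by \form{\xpure} via Lemma~\ref{lem.sat.base}. Consequently the state space modulo subsumption remains finite, and the K\"onig's-lemma contradiction goes through exactly as in the base fragment: every infinite path would contain two nodes \form{\D^{comp}} and \form{\D^{bud}} satisfying conditions (a)--(c) of \lbsl, so the descendant would have been linked back rather than left open.

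The main obstacle I anticipate is making the ``finitely many numeric equivalence classes'' claim fully precise in the presence of the offsets \form{k} that accumulate across successive unfoldings: one must verify that the subsumption test \form{\pure_2\sub{\imply}\pure_1}, combined with \form{n{>}m}, genuinely collapses this unbounded growth of constants into finitely many comparable states, so that the link-back is forced. This is the crux already settled for \slap minus the new predicates in \cite{Loc:CAV:2016}; the additional work here is only to confirm that interleaving dangling and simulation predicates does not disturb it, which follows because normalization quarantines the simulation predicates and the dangling predicates live entirely in the bounded pointer domain, leaving the numeric-projection argument untouched.
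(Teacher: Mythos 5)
Your proposal is correct in its overall strategy and rests on the same underlying idea as the paper---finiteness of the space of node shapes, up to the matching criteria of \lbsl, forces every leaf that is not closed as unsatisfiable to acquire a back-link in finitely many steps---but the decomposition differs in instructive ways. The paper splits the argument into three explicit pieces: a lemma showing the back-link test is insensitive to pure constraints over data fields of points-to predicates; a lemma that, instead of invoking K\"onig's lemma, directly computes an explicit worst-case bound $\mathcal{O}(2^{m}\times 2^{2m^2}\times (2^{m})^N)$ on the length of a path before the spatial part of a leaf must match an ancestor (by counting spatial conjunctions, parameter sequences of the $N$ inductive predicates, and equality conjunctions over the $m$ observable variables); and a final case analysis on the periodic relation over arithmetical parameters (increasing versus decreasing), using well-foundedness to show the implication \form{\pure_2\sub \imply \pure_1} eventually holds once the shapes are linked. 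You instead defer that arithmetic crux wholesale to \cite{Loc:CAV:2016}, which is the one place your argument is thinner than the paper's: the paper supplies its own (if terse) argument for why the ordering constraints \form{v_1{\leq}v_2{+}k} collapse into finitely many comparable states, and a complete write-up would need to reproduce or adapt that case analysis rather than cite it. On the other hand, your treatment of the actual delta of \slap{} over the base fragment---that normalization eliminates the \code{LD}/\code{ST}/\code{DEL} predicates before any leaf reaches \lbsl, and that \form{\dangl{v}} ranges over a bounded pointer vocabulary and hence contributes only finitely many pure-pointer configurations---is more explicit than anything in the paper's proof, which never directly explains how the two new features are absorbed into the prior termination argument; that is a genuine improvement in the exposition of precisely the part that is new in this theorem.
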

%


To show that  {\algonameesl} always terminates for {\slap}, we essentially  prove that
for each leaf (respectively a bud) of the unfolding tree derived for an unsatisfiable formula
if it has not been classified as {\unsat} yet, it must be linked back to
an interior node (respectively a corresponding companion)
to form a cyclic proof in {\em finite} steps.
We remark that in the following, we only consider
 companions which are {\em descendant} of a bud.
We recap that a bud is linked to a companion
if free pointer variables of points-to predicates
and predicate instances are exhaustively matched
and fixed points of all numeric projections of buds are
computable.
In addition, the function {\lbsl} does not rely on
 constraints over data fields.
We use \form{\pure^{df}} to denote the constraint
on variables of data fields of points-to predicates. We show {\algonameesl} always terminates for {\slap} through three steps.
First, we show that
the termination of {\algonameesl}
does not rely on the pure constraints over data fields of points-to predicates.
\begin{lemma}\label{lem.arith}
Given a bud
\form{\D_{bud}{\equiv}\exists \setvars{w}_2\cdot\D_2{\wedge}\pure^{df}_2}
if there exists a companion
\form{\D_{comp}{\equiv}\exists \setvars{w}_1\cdot\D_1{\wedge}\pure^{df}_1}
such that
 the subformula \form{\exists \setvars{w}_2\cdot\D_2}
 can be linked back to the subformula
 \form{\exists \setvars{w}_1\cdot\D_1}, then \form{\D_{bud}}
can be linked back to \form{\D_{comp}}.
\end{lemma}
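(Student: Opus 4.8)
The plan is to show that the decision computed by \lbsl{} factors entirely through the ``skeleton'' obtained by discarding the data-field constraints, so that restoring those constraints can never destroy an existing back-link. First I would make the decomposition explicit: writing the pure part of a symbolic heap as a conjunction $\pure^{ptr} {\wedge} \pure^{ord} {\wedge} \pure^{df}$, where $\pure^{ptr}$ collects the pointer (dis)equalities and dangling predicates, $\pure^{ord}$ collects the ordering constraints contributed by unfolding \slap{} predicates (each of the form $v_1 {\leq} v_2 {+} k$ or $v_1 {\geq} v_2 {+} k$), and $\pure^{df}$ is the residual constraint on the data-field variables $\setvars{d_i}$ of the concrete points-to predicates. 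Thus $\D_{bud} {\equiv} \exists \setvars{w}_2 {\cdot} (\D_2 {\wedge} \pure^{df}_2)$ and $\D_{comp} {\equiv} \exists \setvars{w}_1 {\cdot} (\D_1 {\wedge} \pure^{df}_1)$, where each $\D_i$ carries the base heap $\heap_{b_i}$, the predicate instances, and $\pure^{ptr}_i {\wedge} \pure^{ord}_i$ only.

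Next I would revisit the three matching conditions (a)--(c) of \lbsl{} and check that each inspects only the skeleton. Conditions (a) and (b) match points-to predicates and inductive-predicate instances: by the spatial-projection convention a concrete cell $\sepnodeF{x}{c}{\setvars{v}}$ is matched on its pointer $x$, its type $c$, and its pointer-typed contents, with the data fields treated as wildcards, while an occurrence $\seppredF{\code{P}}{\setvars{t}}_m^o$ is matched on its name, its pointer arguments, and its order/unfolding annotations. Hence (a) and (b) are determined by $\heap_{b_1}$, $\heap_{b_2}$ and the predicate instances alone, independently of $\pure^{df}$. For condition (c), the implication that \lbsl{} actually verifies is the one over pointer and ordering constraints, $\pure^{ptr}_2\sub {\wedge} \pure^{ord}_2\sub \imply \pure^{ptr}_1 {\wedge} \pure^{ord}_1$; the accompanying well-foundedness obligation is discharged through computability of the fixed points of the numeric projections, and in \slap{} these projections retain exactly the ordering fragment $\pure^{ord}$ and contain no data-field content. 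Consequently none of (a)--(c) reads $\pure^{df}_1$ or $\pure^{df}_2$.

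Finally I would assemble the result: any substitution $\sub$ witnessing the back-link of $\exists \setvars{w}_2 {\cdot} \D_2$ to $\exists \setvars{w}_1 {\cdot} \D_1$ already satisfies (a)--(c), and since those conditions are insensitive to $\pure^{df}$, the same $\sub$ witnesses a back-link of the full $\D_{bud}$ to $\D_{comp}$, as claimed.

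I expect the main obstacle to be condition (c) together with the numeric fixed-point step: I must argue rigorously that the \slap{} restriction on $\pure_r$ guarantees the only arithmetic entering the subsumption check and the fixed-point computation is the ordering fragment $\pure^{ord}$, so that $\pure^{df}$ is genuinely \emph{inert} for \lbsl{} rather than merely being dropped in an ad hoc way. This is precisely the point at which the syntactic shape of \slap{} predicate definitions --- restricting recurrent constraints to ordering (in)equalities over the separated data and parameter variables --- is essential, and it is the step I would develop in most detail.
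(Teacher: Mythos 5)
There is a genuine gap. Your argument rests on the claim that condition (c) of \code{\lbsl} ``actually verifies'' only the implication over pointer and ordering constraints, so that \form{\pure^{df}} is never read by the back-link check. But in the paper's definition of \code{\lbsl} (Sect.~5), condition (c) is the implication \form{\pure_2\sub \imply \pure_1} over the \emph{full} pure constraints, and the {\slap} definition of \form{\pure_r} explicitly admits ``an arithmetical constraint over data fields \form{\setvars{d_i}}'', so data-field constraints do appear in both \form{\pure_1} and \form{\pure_2} and do enter that implication. Asserting that (a)--(c) are insensitive to \form{\pure^{df}} is therefore not a proof of the lemma but a restatement of its conclusion; you flag this as the step you would ``develop in most detail'', but the proposal never supplies that development, and the factoring you hope for is exactly what must be established.

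The paper's proof does the missing work directly: writing the (weakened, substituted) data-field constraint of the bud as \form{\pure'^{df}_{21} {\wedge} \pure'^{df}_{22}} with \form{\FV(\pure'^{df}_{21})} existentially quantified and \form{\FV(\pure'^{df}_{22})} observable (and similarly for the companion), it argues (1) that \form{{\exists}\setvars{w}_1{\cdot}\pure'^{df}_{11}} is implied by \form{\true} because the companion is still open, hence satisfiable in the decidable base fragment, and (2) that along an unfolding path the observable part \form{\pure'^{df}_{22}} is an accumulated conjunction of copies of \form{\pure^{df}_{12}}, hence implies it. Together these give \form{\pure'^{df}_2 \imply \pure^{df}_1}, which is what makes the data-field constraints harmless in condition (c). Your skeleton/decomposition framing is compatible with this, but without steps (1) and (2) — in particular the observation that openness of the companion discharges the existentially quantified part, and the structural accumulation argument for the free part — the proof is incomplete.
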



\begin{proof}
 We assume
 \form{\exists \setvars{w}_2\cdot\D_2} can be linked back to
\form{\exists \setvars{w}_1\cdot\D_1}.
As so, the heap in bud can be weakened and matched with the heap
in the companion with some substitutions \form{\rho}.
And \form{\pure^{df}_2} after weakened and substituted to become
  \form{\pure'^{df}_2},
at the last step we need to prove  \form{\pure'^{df}_2 {\implies} \pure^{df}_1}.
Suppose that \form{\pure'^{df}_2{\equiv} \pure'^{df}_{21} {\wedge} \pure'^{df}_{22}}
where \form{\FV(\pure'^{df}_{21}) {\in} w_2} and \form{\FV(\pure'^{df}_{22})}
are observable variables.
Similarly, suppose that \form{\pure'^{df}_1{\equiv} \pure'^{df}_{11} {\wedge} \pure'^{df}_{12}}
where \form{\FV(\pure'^{df}_{11}) {\in} w_1} and \form{\FV(\pure'^{df}_{12})}
are observable variables.

As \form{\D_{comp}} is still open (it is not unsatisfiable) and \form{\pure'^{df}_{11}} is in a decidable fragment,
there exists \form{\sstack} such that
\form{\sstack {\force} \pure'^{df}_{11}}.
This implies that there exists \form{\sstack_1}
such that \form{\sstack {\subseteq} \sstate_1},
 \form{\sstack_1 {\force} \exists \setvars{w}_1{\cdot} \pure'^{df}_{11}}.
This means \form{\true {\implies} {\exists} \setvars{w}_1{\cdot} \pure'^{df}_{11}} (1).

Note if \form{\D_{comp}} is a descendant of \form{\D_{bud}},
\form{\pure'^{df}_{22}} must be of the form \form{\pure'^{df}_{22}{\equiv}\pure^{df}_{12}{\wedge}\pure^{df}_{12}{\wedge}...{\wedge}\pure^{df}_{12}}.
Thus, \form{\pure'^{df}_{22} \implies \pure^{df}_{12}} (2).

From (1) and (2), we have \form{\pure'^{df}_2 \implies \pure^{df}_1}.
\end{proof}

\noindent Secondly, we prove that
{\algonameesl} terminates for formulae including spatial-based
inductive predicates, those predicates
whose parameters are pointer-typed.
The quantifier elimination of function \form{\satb{\base}}  implies
that the satisfiability problem in this fragment
does not rely on existentially quantified 
 variables.
In other words, this problem
is based on free 
 variables.
As the number of these free variables in a formula
 is finite,
checking satisfiability in this fragment
 is decidable.
%
\begin{lemma}\label{lem.shape}
{\algonameesl} terminates for a system of
spatial-based
inductive predicates. 
\end{lemma}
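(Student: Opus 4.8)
\noindent\emph{Proof plan.} The plan is to show that the cyclic execution tree $\utree{}$ that {\algonameesl} builds for a spatial-based system is finite, so that the main loop of Algorithm~\ref{algo.s2.sat} must halt. Each application of \unfoldsl{} replaces a single predicate occurrence by the finitely many disjuncts of its definition, while \uosl, \oosl{} and \lbsl{} never create children; hence $\utree{}$ is finitely branching. By K\"onig's lemma it then suffices to exclude an infinite branch. I would argue by contradiction, assuming an infinite branch $\D_0,\D_1,\D_2,\ldots$ on which no node is ever closed by \uosl/\oosl{} and no bud is linked back by \lbsl.

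The first step is to replace the formula at each $\D_i$ by the finite abstraction that \lbsl{} actually inspects. By Lemma~\ref{lem.arith} the existence of a back-link is independent of the constraints over data-field variables, so these are discarded; since the predicates are spatial-based, every surviving pure atom is then an (in)equality, a null-equality, or a dangling predicate over pointer variables. Moreover \unfoldsl{} introduces only fresh, existentially quantified variables, so the set $V$ of free pointer variables is precisely that of the root $\D_0$ and remains fixed along the branch. Finally \lbsl{} replaces every existentially quantified argument of the points-to and predicate atoms by the wildcard $\anon$ before matching. Thus the data that decide a back-link form a finite labelled structure: a multiset of predicate atoms with names drawn from the finite set $\PName$ and arguments drawn from $V\cup\{\anon\}$, a finite heap of points-to atoms typed over the finite set $\Dns$ with roots in $V\cup\{\anon\}$, and an equality/disequality/null/dangling pattern over the finite set $V$.

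The second step is a well-quasi-ordering argument over these abstractions. The matching carried out by \lbsl---embedding the companion's points-to atoms and predicate atoms into the bud's and discharging $\pure_2\sub{\imply}\pure_1$---is exactly a coordinatewise embedding once each abstraction is recorded as the vector counting its atoms of each abstract type. Passing first to a subsequence on which the (finitely many) pointer patterns over $V$ are constant, Dickson's lemma (equivalently Higman's lemma) then yields indices $p<q$ with $\D_p$ embedding into $\D_q$. Because the unfolding number is strictly increased at every \unfoldsl{} step, the companion/bud side conditions $k{\geq}i$ and $n{>}m$ are automatically met by $(\D^{comp},\D^{bud}){=}(\D_p,\D_q)$, so \lbsl{} must fire at $\D_q$ and close the branch, contradicting its infiniteness. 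Hence $\utree{}$ is finite and {\algonameesl} terminates, and soundness of this back-link follows from Lemma~\ref{lem.sat.base} together with the global soundness of cyclic proofs~\cite{Loc:CAV:2016}.

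I expect the crux to lie in the second step: justifying that the abstraction genuinely yields a well-quasi-order even though the concrete base heaps and predicate multisets grow without bound along the branch. The subtle point is that, after existentials are collapsed to $\anon$ and $V$ is held fixed, each $\D_i$ is a finite multiset over one and the same finite signature, so growth only increases counts rather than introducing new atom types; verifying that this collapsing is faithful to the matching conditions of \lbsl, and that the order- and unfolding-number bookkeeping never blocks the embedding supplied by Dickson's lemma, is where the real work of the proof is concentrated.
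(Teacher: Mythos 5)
Your proposal is correct in outline and reaches the same conclusion, but it runs on a different combinatorial engine than the paper's proof. Both arguments share the same preparation: invoke Lemma~\ref{lem.arith} to discard data-field constraints, observe that unfolding only introduces existentially quantified variables so the relevant vocabulary is the fixed finite set of observable pointer variables, and reduce back-linking to matching over a finite signature. From there the paper argues by explicit enumeration and pigeonhole: it counts the possible spatial conjunctions of points-to atoms ($\mathcal{O}(2^m)$), the possible parameter sequences of the $N$ inductive predicate occurrences ($\mathcal{O}((2^m)^N)$), and the possible (dis)equality patterns ($\mathcal{O}(2^{2m^2})$), concluding that the longest path before a repeat --- and hence a back-link --- is forced is $\mathcal{O}(2^{m}\times 2^{2m^2}\times(2^{m})^N)$. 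You instead go through K\"onig's lemma plus a well-quasi-ordering (Dickson's lemma on atom-count vectors, after fixing the pointer pattern on a subsequence). What each buys: the paper's pigeonhole yields an \emph{explicit depth bound}, i.e.\ a complexity statement and not merely termination, which is arguably the content of its proof; your WQO route gives no effective bound but is more robust, since it tolerates unbounded growth of the multiset of atoms along a branch, whereas the paper's counting implicitly treats the abstraction as a finite \emph{set} of atom types. You also supply the finite-branching/K\"onig step that the paper leaves implicit. One caveat applies equally to both proofs: collapsing existentially quantified arguments to $\anon$ is not obviously faithful to the substitution condition $\D^{comp}\equiv\D^{bud}\sub$ of {\lbsl} (shared existentials across atoms in the companion need not be matchable by a single $\sub$ even when the $\anon$-abstractions agree); you correctly flag this as the crux but do not resolve it, and neither does the paper.
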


\begin{proof}
In the following,
we use \form{\setvars{v}} to denote a sequence of variables.
Furthermore, we use \form{v_i} to denote
the \sm{i^{th}} variable in the sequence \form{\setvars{v}}.
Two sequence \form{\setvars{v}} and \form{\setvars{t}}
have the same sequence of observable variables
if for all valid position \form{k} then \form{\setvars{v_k}}
is an observable variable or \form{\nil} and \form{\setvars{t_k}{=}\setvars{v_k}}.

Suppose we are constructing a proof for the input
 \form{\D_0} where 
 \form{\D_0{\equiv}\exists \setvars{w}{\cdot}\base{\sep}
{\code{P_1}(\setvars{v}_{1}){\sep}..{\sep}{\code{P_{N}}(\setvars{v}_{N})}}}
and \form{\D_0} has \form{m} observable variables.
We are computing the longest path from \form{\D_0}
to a leaf \form{\D_{bud}} such that \form{\D_{bud}} can be linked back
to \form{\D_0}. In the worst case, this back-link
is established if (i) all points-to predicates are linked,
(ii) inductive predicates are linked and (iii) (dis)equalities constraints
between them are identical.
The complexity result is computed based on the following three facts.
\begin{enumerate}
\item There are $\mathcal{O}(2^m)$ possibilities of the spatial conjunction
of points-to predicates.
\item There are $\mathcal{O}(2^m)$ possibilities of sequence
of parameters of each inductive. And we have $\mathcal{O}(N)$ inductive
predicates.
\item The number of equality conjunctions over \form{m}
observable variables (including \form{\nil})
is \form{2m^2}. Thus, there are $\mathcal{O}(2^{2m^2})$ possibilities
of equalities.
\end{enumerate}
Thus, the longest path of an unfolding tree derived for \form{\D_0}
 is:
$\mathcal{O}(2^{m}\times 2^{2m^2} \times (2^{m})^N)$
\end{proof}




 Lastly, we show 
that {\algonameesl} always terminates for formulas
which contain inductive predicates with shape, pure constraints over data fields and
periodic relations over arithmetical parameters. We show that given a formula in {\slap},
{\algonameesl} can always construct a cyclic proof in finite time.


The main Theorem is shown based on the following.
\begin{enumerate}
 \item The termination does not rely on those constraints
on data fields of points-to predicates (lemma \ref{lem.arith}).
\item {\algonameesl} runs in $\mathcal{O}(2^{m}\times 2^{2m^2} \times (2^{m})^N)$
to link shape part of a leaf to one of its descendant (lemma \ref{lem.shape}).
 \item We now show the implication checking on pure formulas always hold
after shape parts have been linked.
For simplicity, we consider one arithmetical parameter \form{n}
of one inductive definition with periodic relation \form{R}.
Let \form{\pure_n} be the pure formula of the input relevant to
\form{n}, \form{S} be the periodic set
of \form{R} and \form{inv_n} be its corresponding Presburger formula.
There are two following subcases:
\begin{itemize}
\item \form{\pure_n {\wedge}\inv_n \implies \false}. This cace is detected by over-approximation step.
\item Otherwise, \form{\pure_n {\wedge} \inv_n} is satisfied.
We notice that \form{\pure_n {\wedge} \inv_n} must imply \form{\inv_n}.
In a path, the arithmetical part of a {\slap} bud is a disjunct of
the unfolding from its descendant.
The relation \form{R} is well-founded relation.
If  \form{R} is an increasing relation,
applying \form{R} over  \form{\pure_n {\wedge} \inv_n}
will be bounded by \form{\inv_n}. Thus, {\algonameesl} can link the pure parts
in finite time.
If \form{R} is a decreasing relation,
the periodic set \form{S'} derived for this disjunct
is a subset of the set derived for its descendant.
Thus, the implication at the last step
of function {\lbsl} always holds.
That is, {\lbsl} can always link the arithmetical part of
such above leaf to its descendant nodes.
\end{itemize}
Hence, {\algonameesl} always terminates.
 \hfill 
\end{enumerate}


 \end{subappendices}


\end{document}